\newcommand{\todo}[1]{\sethlcolor{red}\hl{TODO: #1}}
\newcommand{\eat}[1]{}
\DeclareMathOperator*{\argmax}{arg\,max}
\newtheorem{thm}{Theorem}
\newtheorem{cor}[thm]{Corollary}
\newtheorem{lemma}{Lemma}
\newtheorem{problem}{Problem}
\begin{document}

\title{Maximizing Coverage Centrality via Network Design: Extended Version}


%
%
\author{
    \IEEEauthorblockN{Sourav Medya\IEEEauthorrefmark{1}, Arlei Silva\IEEEauthorrefmark{1}, Ambuj Singh\IEEEauthorrefmark{1}, Prithwish Basu\IEEEauthorrefmark{2}, Ananthram Swami\IEEEauthorrefmark{3}}
    \IEEEauthorblockA{\IEEEauthorrefmark{1}Computer Science Department, University of California, Santa Barbara, CA, USA, \{medya, arlei, ambuj\}@cs.ucsb.edu}
    \IEEEauthorblockA{\IEEEauthorrefmark{2}Raytheon BBN Technologies, Cambridge, MA, USA, pbasu@bbn.com}
    \IEEEauthorblockA{\IEEEauthorrefmark{3}Army Research Laboratory, Adelphi, MD, USA, ananthram.swami.civ@mail.mil}
}

\maketitle
\begin{abstract}


Network centrality plays an important role in many applications. Central nodes in social networks can be influential, driving opinions and spreading news or rumors. 
In hyperlinked environments, such as the Web, where users navigate via clicks, central content receives high traffic, becoming targets for advertising campaigns.
While there is an extensive amount of work on centrality measures and their efficient computation, controlling nodes' centrality via network updates is a more recent and challenging task. Performing minimal modifications to a network to achieve a desired property falls under the umbrella of network design problems. This paper is focused on improving the group (coverage and betweenness) centrality of a set of nodes, which is a function of the number of shortest paths passing through the set, by adding edges to the network. We introduce several variations of the problem, showing that they are NP-hard as well as APX-hard.  Moreover, we present a greedy algorithm, and even faster sampling algorithms, for group centrality maximization with theoretical quality guarantees under a restricted setting and good empirical results in general for several real datasets.




\end{abstract}

\section{Introduction}

\textit{Network design} is a recent area of study focused on modifying or redesigning a network in order to achieve a desired property \cite{gupta2011approximation,zhu2004power}. As networks become a popular framework for modeling complex systems (e.g. VLSI, transportation, communication, society), network design provides key controlling capabilities over these systems, specially when resources are constrained. Existing work has investigated the optimization of global network properties, such as  minimum spanning tree \cite{krumke1998}, shortest-path distances \cite{lin2015,dilkina2011,meyerson2009}, diameter \cite{demaine2010}, and information diffusion-related metrics \cite{Khalil2014,Tong2012GML} via a few local (e.g. vertex, edge-level) upgrades.
Due to the large scale of real networks, computing a global network property becomes time-intensive. For instance, computing all-pair shortest paths in large networks is prohibitive. As a consequence, 
design problems are inherently challenging. 
Moreover, because of the combinatorial nature of these local modifications, network design problems are often NP-hard, and thus, require the development of efficient approximation algorithms.

We focus on a novel network design problem, which is improving the \textit{group centrality}. Given a node $v$, its coverage centrality is the number of distinct node pairs for which a shortest path passes through $v$, whereas its betweenness centrality is the fraction of shortest paths between any distinct pair of nodes passing through $v$ \cite{yoshida2014}. The centrality of a group $X$ is a function of the shortest paths that go through members of $X$ \cite{yoshida2014}. \emph{Our goal is to maximize group centrality, for a target group of nodes, via a small number of edge additions.}

As an application scenario, consider an online advertising service where advertisers can place links on a set of pages depending on user context information (see Figure \ref{fig:motivation}). For instance, users navigating from travel to car related pages are likely to be interested in car rentals. Thus, the ad service can display links in a subset of pages in order to increase the number of shortest paths from travel related web-pages to car related ones via a set of pages owned by a given car rental company. The idea is to boost the traffic to the car rental pages while users browse the Web, assuming that clicks will often follow shortest paths. Once the user arrives at an advertiser's page, the car rental company can offer targeted information to support her browsing through the automobile related content (e.g. highlighting car models often rented in a given tourist location). This problem is equivalent to optimizing the group centrality of the advertiser's pages---for a selected set of node pairs---by adding few edges from a candidate set.

Another application scenario is a professional network, such as \textit{LinkedIn}, where the centrality of some users (e.g. employees of a given company) might be increased via connection recommendations/advertising. 
In military settings, where networks might include adversarial elements, inducing the flow of information towards key agents can enhance communication and decision making \cite{perumal2013}.
Moreover, multiple recent approaches focus on the most probable (shortest) paths to allow scalable solutions for social influence and information propagation~\cite{kimura2006tractable,chen2010scalable}. Thus, to achieve better information propagation through a target group of nodes of interest, one might improve their shortest path based centrality. 

From a theoretical standpoint, for any objective function of interest, we can define a \textit{search} and a corresponding \textit{design} problem. In this paper, we show that, different from its search version \cite{yoshida2014}, group centrality maximization cannot be approximated by a simple greedy algorithm. Furthermore, we study several variations of the problem and show that, under two realistic constraints, the problem does allow a constant factor greedy approximation. In fact, we are able to prove that our approximation for the constrained problem is \textit{optimal}, in the sense that the best algorithm cannot achieve a better approximation than the one obtained by our approach. In order to scale our greedy solution to large datasets, we also propose efficient sampling schemes, with approximation guarantees, for group centrality maximization. 

\begin{figure}[t]
    \centering
    \vspace{-4mm}
    \includegraphics[keepaspectratio, width=.35\textwidth]{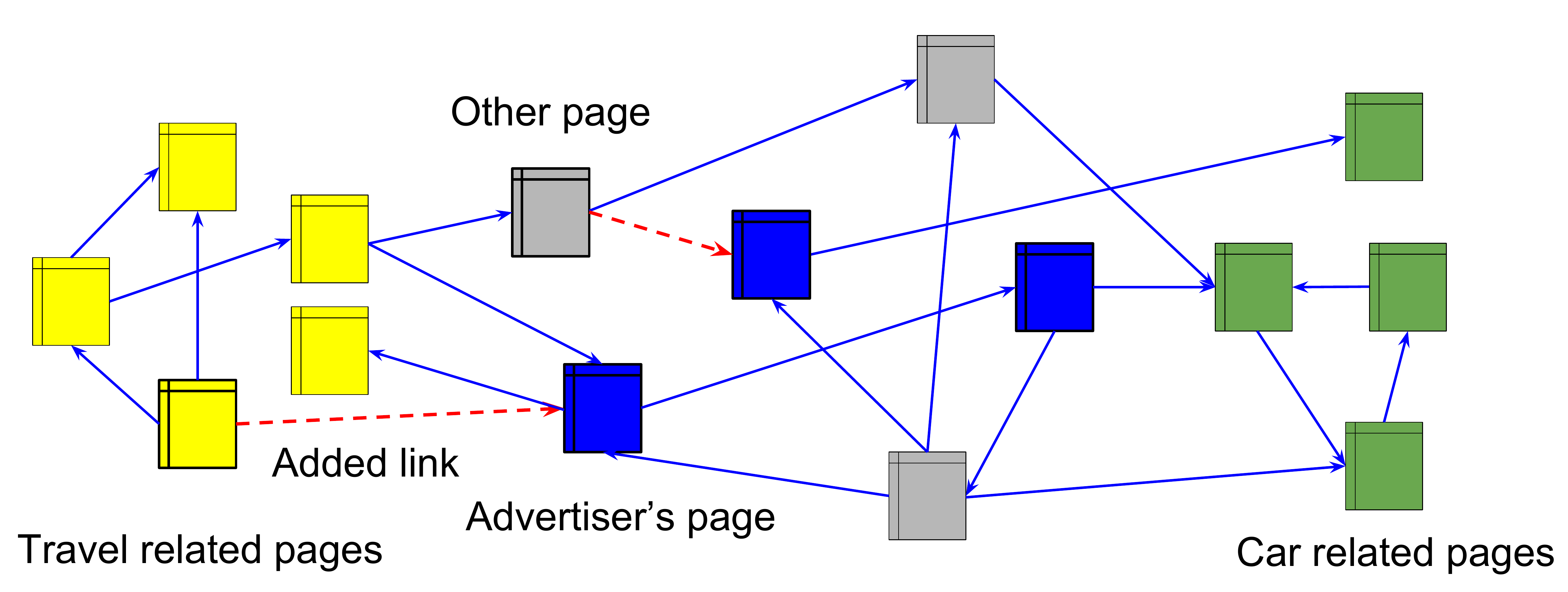}
 
      \caption{Application scenario: Adding links to the Web graph in order to increase user traffic from travel to car related pages via a set of car rental (advertiser's) pages.\label{fig:motivation}}
     
\end{figure}

\textbf{Our Contributions.}
The main contributions of this paper can be summarized as follows:
\begin{itemize}
    \item We study several variations of a novel network design problem, the group centrality optimization, and prove that they are NP-hard as well as APX-hard. 
    \item We propose a simple greedy algorithm and even faster sampling algorithms for group centrality maximization.
    
    \item We show the effectiveness of our algorithms on several real datasets and also prove that the proposed solutions are optimal for a constrained version of the problem.
\end{itemize}

\section{Problem Definition}
\label{sec:problem_definition}

We assume $G(V,E)$ to be an undirected\footnote{We discuss how our methods can be generalized to directed networks in the Appendix.} graph with sets of vertices $V$ and edges $E$.  
A shortest path between vertices $s$ and $t$ is a path with minimum distance (in hops) among all paths between $s$ and $t$, with length $d(s,t)$. 
By convention, $d(s,s)=0$, for all $s\in V$. Let $P_{st}$
denote the set of vertices in the shortest paths (multiple ones might exist) between $s$ and $t$ where $s,t \notin P_{st}$. We define $Z$ as the set of candidate pairs of vertices, $Z\subseteq V\setminus X\times V\setminus X$, which we want to cover. The \textit{coverage centrality}  of a vertex is defined as:
\begin{equation}
   C(v) = |\{ (s,t) \in Z| v \in P_{st},s\neq v, t\neq v\}|
\end{equation}
$C(v)$ gives the number of pairs of vertices with at least one shortest path going through (i.e. covered by) vertex $v$. 
The \textit{coverage centrality} of a set of vertices $X \subseteq V$ is defined as:
\begin{equation}
    \begin{small}
    C(X) = |\{ (s,t) \in Z| v \in P_{st}, v\in X \wedge s,t\notin X\}|
    \end{small}
    \label{eqn::coverage_centrality}
\end{equation}


A set $X$ covers a pair ($s,t$) iff $X\cap P_{st} \neq \varnothing$, i.e., at least one vertex in $X$ is part of a shortest path from $s$ to $t$. 
Our goal is to maximize the coverage centrality of a given set $X$ over a set of pairs $Z$ by adding edges from a set of candidate edges $\Gamma$ to $G$. For instance, in our online advertising example, $X$ are pages owned by the car rental company, $Z$ are pairs of pages related to travel and cars, and $\Gamma$ are potential links that can be created to increase the traffic through the pages in $X$. Let $G_m$ denote the modified graph after adding edges $E_s\subseteq \Gamma$, $G_m=(V,E\cup E_s)$. We define the coverage centrality of $X$ (over pairs in $Z$) in the modified graph $G_m$ as $C_m(X)$.

\begin{table} [t]
\centering
\small
\vspace{-2mm}
\begin{tabular}{| c | c |}
\hline
\textbf{Symbols} & \textbf{Definitions and Descriptions}\\
\hline
$d(s,t)$ & Shortest path (s.p.) distance between $s$ and $t$\\
\hline
$n$ & Number of nodes in the graph\\
\hline
$m$ & Number of edges in the graph\\
\hline
$G(V,E)$ & Given graph (vertex set $V$ and edge set $E$)\\
\hline
$X$ & Target set of nodes\\
\hline
$C(v), C(X)$ & Coverage centrality of node $v$, node set $X$\\
\hline
$\Gamma$& Candidate set of edges\\
\hline
$k$& budget \\
\hline
$P_{st}$ & The set of nodes on the s.p.s between $s$ and $t$\\
\hline
$G_m, C_m$ & Modified graph and modified centrality\\
\hline
$Z$ & Pairs of vertices to be covered\\
\hline
$m_u$ & Number of uncovered pairs, $|Z|$\\
\hline
\end{tabular}
\caption { Frequently used symbols}\label{tab:table_symbol}
\end{table}

\begin{problem}\textbf{Coverage Centrality Optimization (CCO):}
Given a network $G=(V,E)$, a set of vertices $X\subset V$, a candidate set of edges $\Gamma$, a set of vertex pairs $Z$ and a budget $k$, find a set of edges $E_s\subseteq \Gamma$, such that $|E_s|\leq k$ and $C_m(X)$ is maximized.
\label{def:CCO}
\end{problem}

For simplicity, in the rest of the paper, we assume $Z=V\setminus X \times V\setminus X$ unless stated otherwise.  
As a consequence: 
\begin{equation}
    \begin{small}
    C(X) = |\{ (s,t) \in V\setminus X \times V\setminus X | v \in P_{st}, v\in X, s<t\}|
    \end{small}
    \label{eqn::coverage_centrality2}
\end{equation}
where $s<t$ implies ordered pairs of vertices. 
Fig.~\ref{fig:example1} shows a solution for the CCO problem with budget $k=1$ for an example network where the target set $X=\{d,f\}$ and the candidate set $\Gamma= \{(d,a), (d,b), (f,b)\}$. 

\begin{figure}[h]
    \centering
    \subfloat[Initial graph]{
        \includegraphics[keepaspectratio, width=.15\textwidth]{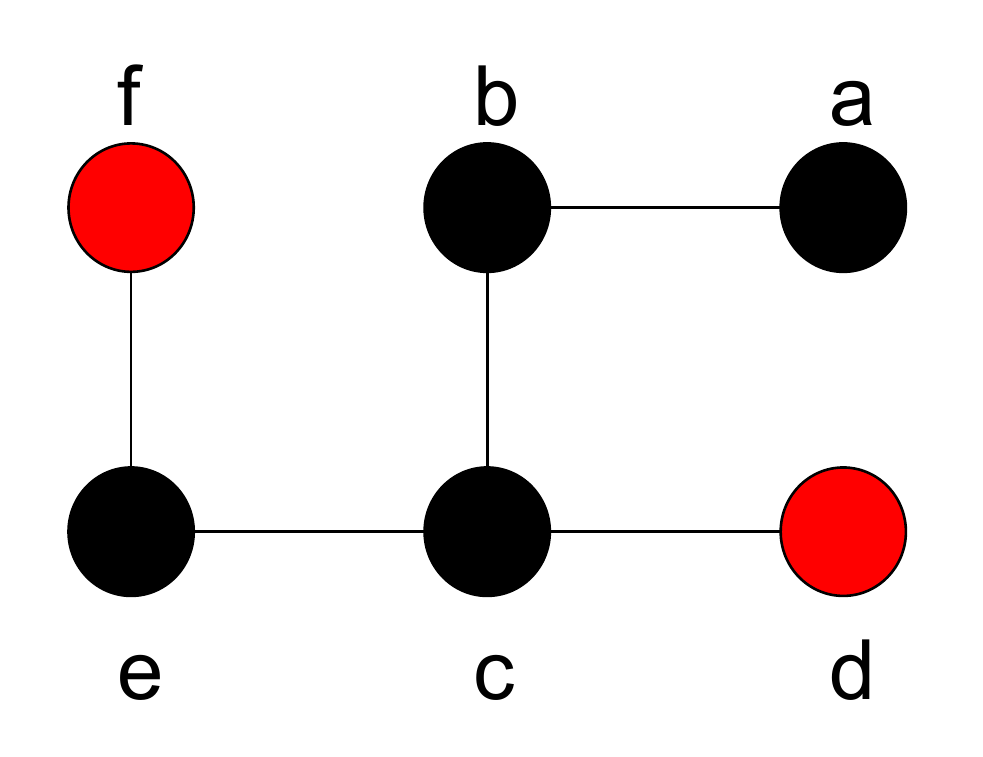}
       \label{fig:ex11}
    }
    \subfloat[Modified graph]{
       \includegraphics[keepaspectratio, width=.15\textwidth]{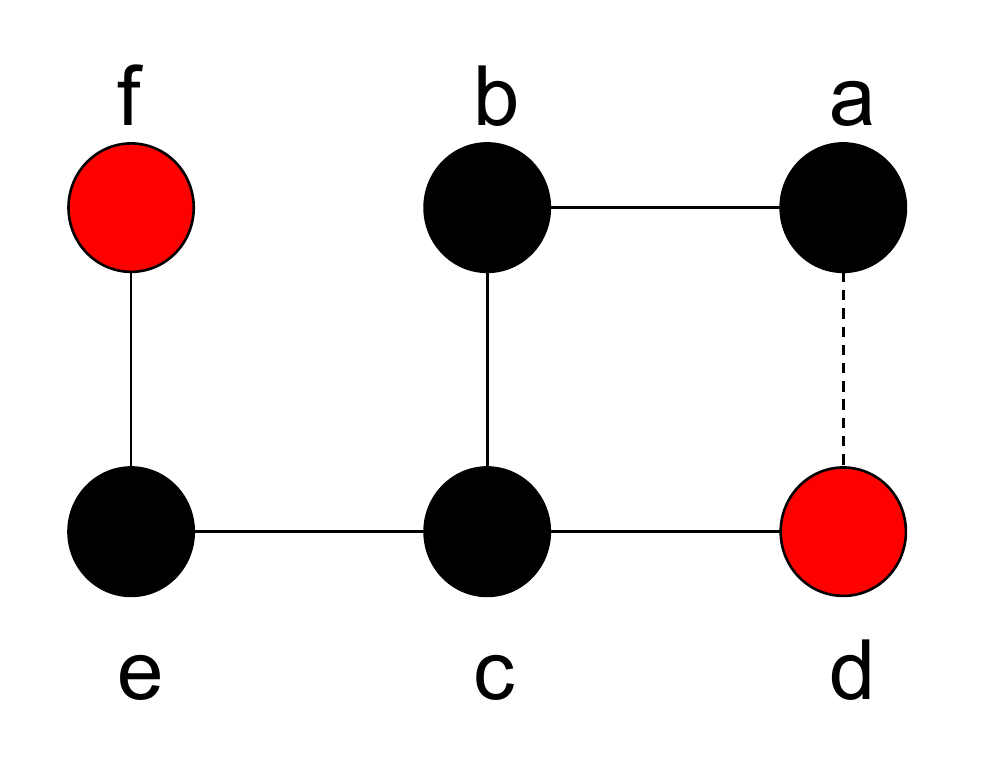}
       \label{fig:ex12}
    }
     \caption{Example of Coverage Centrality Optimization problem. We want to optimize the centrality of $\{d,f\}$ with a budget of one edge from the candidates $\{(d,a), (d,b), (f,b)\}$. 
     The coverage centrality of $\{d,f\}$ is $0$ in the initial graph (a) and $3$ in the modified graph (b). Node $d$ belongs to the shortest paths between $(a,e)$, $(a,c)$ and $(a,f)$ in (b).\label{fig:example1}}
\end{figure}


Similarly, we can also formulate the group betweenness centrality optimization problem. Given a vertex set $X \subseteq V$, its \textit{group betweenness centrality}  is defined as:
\begin{equation}
   B(X) = \sum_{s,t\in V\setminus X} \frac{\sigma_{s,t}(X)}{\sigma_{s,t}}
\end{equation}
where $\sigma_{s,t}$ is the number of shortest paths between $s$ and $t$, $\sigma_{s,t}(X)$ is the number of shortest paths between $s$ and $t$ passing through $X$. We define the group betweenness centrality of $X$ in the modified graph $G_m$ as $B_m(X)$. 

\begin{problem}\textbf{ Betweenness Centrality Optimization (BCO):}
Given a network $G=(V,E)$, a node set $X\subset V$, a candidate edge set $\Gamma$, a set of node pairs $Z$ and a budget $k$, find a set of edges $E_s\subseteq \Gamma$, such that $|E_s|\leq k$ and $B_m(X)$ is maximized.
\label{def:BCO}
\end{problem}

In this paper, we focus on the CCO problem. However, all the results described can be mapped to the BCO problem with small changes. To avoid redundancy and due to the space constraints, we omit similar results for BCO.

\section{Hardness and Inapproximability}
\label{sec:hardness}


This section provides complexity analysis of the CCO problem. We show that CCO is NP-hard as well as APX-hard. More specifically, CCO cannot be approximated within a factor grater than $(1- \frac{1}{e})$. 




\begin{figure}[t]
    \centering
    \vspace{-2mm}
   \includegraphics[width=.35\textwidth]{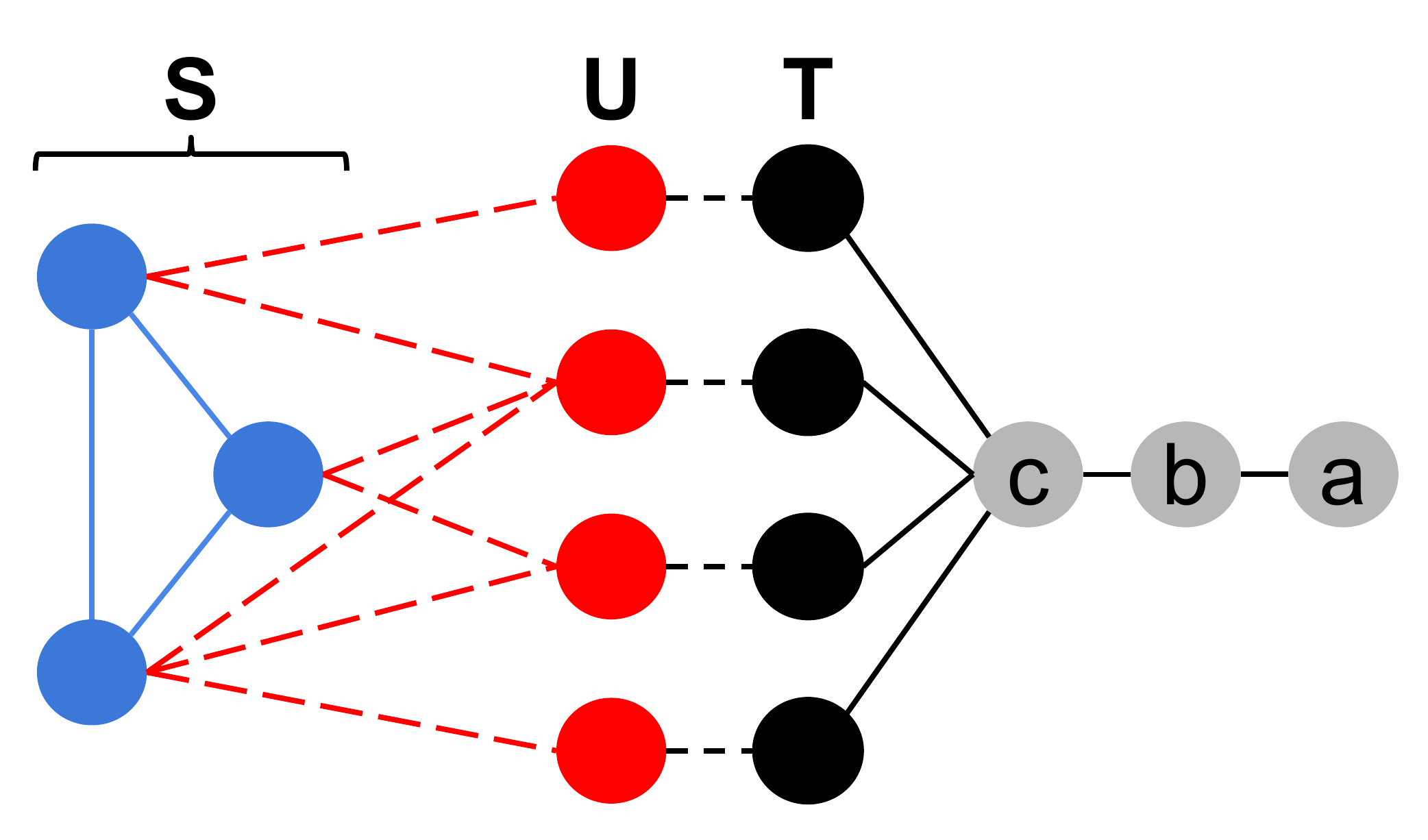}
   \vspace{-1mm}
    \caption{ Example of reduction from Set Cover to Coverage Centrality Optimization
    , where $|U|=4$ and $|S|=3$. 
     Target set $X=\{a\}$ and candidate edges $\Gamma$ connect $a$ to nodes in $S$.\label{fig:nphard}}
 \end{figure}

\begin{thm}\label{thm:nphard} The CCO problem is NP-hard.
\end{thm}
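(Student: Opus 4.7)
The plan is to prove NP-hardness by a polynomial-time reduction from the decision version of Set Cover, which is a classical NP-hard problem. Given a Set Cover instance with universe $U=\{u_1,\ldots,u_n\}$, a collection $S=\{s_1,\ldots,s_m\}$ of subsets of $U$, and an integer $k$, the goal is to construct, in polynomial time, a CCO instance whose optimum encodes whether $U$ can be covered with at most $k$ sets.

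Following the gadget suggested by Fig.~\ref{fig:nphard}, I would build a graph $G$ on vertex set $V=\{a\}\cup\{s_1,\ldots,s_m\}\cup\{u_1,\ldots,u_n\}$ (augmented with a small number of auxiliary vertices if needed to force the right distances). The fixed edges encode the set-membership relation between the $s_i$'s and the $u_j$'s, while the candidate set is defined as $\Gamma=\{(a,s_i):s_i\in S\}$, the target as $X=\{a\}$, and the budget as $k$. The structural edges must be chosen so that (i) initially $a$ lies on no shortest path, giving $C(\{a\})=0$, and (ii) adding the candidate edge $(a,s_i)$ places $a$ on a shortest path between a dedicated ``partner'' vertex (attached to $s_i$) and each element vertex $u_j$ with $u_j\in s_i$, and only for such $u_j$.

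The key correctness step is to show that, for any chosen $E_s\subseteq\Gamma$, the coverage centrality $C_m(\{a\})$ equals $\bigl|\bigcup_{(a,s_i)\in E_s} s_i\bigr|$ (up to an additive or multiplicative constant that depends only on the construction). Granted this, the Set Cover instance admits a cover of size at most $k$ if and only if the CCO instance has a solution $E_s$ with $|E_s|\le k$ and $C_m(\{a\})=n$, which yields NP-hardness. Since the construction is polynomial in $n+m$, the reduction is valid.

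The main obstacle is the gadget design: one must simultaneously enforce that (a) before any edges are added, $a$ is on no shortest path, so no coverage is ``given for free''; (b) each added edge $(a,s_i)$ introduces coverage for exactly the elements of $s_i$ and nothing else, so double-counting across overlapping sets collapses to a union; and (c) no shortcut shortest paths through $a$ arise between unrelated pairs $(u_j,u_{j'})$. Enforcing all three conditions typically requires introducing pendant or partner vertices and carefully controlling initial pairwise distances so that a path through $a$ becomes \emph{strictly} shortest precisely when the corresponding candidate edge is added. Once the gadget is shown to satisfy these properties, the equivalence with Set Cover is immediate.
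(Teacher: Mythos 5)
Your high-level plan coincides with the paper's: reduce from Set Cover with target $X=\{a\}$, candidate edges $\Gamma=\{(a,s_i)\}$, and budget $k$, and your conditions (a)--(c) are exactly the right desiderata for the gadget. The gap is that the gadget itself---which you correctly identify as ``the main obstacle''---is the entire content of the proof, and the one concrete design you offer does not satisfy your own conditions. First, a partner vertex $p_i$ \emph{attached to $s_i$} can never generate coverage: for $u_j\in s_i$ the path $p_i$--$s_i$--$u_j$ has length $2$ and avoids $a$, while after adding $(a,s_i)$ any $p_i$-to-$u_j$ path through $a$ has length at least $3$, so $a$ lies on no shortest path for these pairs and $C_m(\{a\})$ stays at $0$. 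Second, even if you relocate the partners next to $a$, having a \emph{dedicated} partner per set breaks the union semantics you need: if $u_j\in s_i\cap s_{i'}$ and both edges are added, the two distinct pairs $(p_i,u_j)$ and $(p_{i'},u_j)$ are both covered, so the objective behaves like $\sum_{(a,s_i)\in E_s}|s_i|$ rather than $\bigl|\bigcup_{(a,s_i)\in E_s} s_i\bigr|$---and maximizing a sum of set sizes is solved in polynomial time by picking the $k$ largest sets, so no hardness follows. Union semantics forces the covered pairs to be in bijection with the \emph{elements}, which requires a single shared anchor, not per-set partners.

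The paper resolves this with one anchor $b$ adjacent to $a$ and to a hub $c$, where $c$ is adjacent to a copy layer $T=\{t_1,\dots,t_n\}$ with each $t_j$ adjacent to $u_j$. This pads $d(b,u_j)$ to exactly $3$, so that adding $(a,s_i)$ creates a \emph{tying} path $b$--$a$--$s_i$--$u_j$ of length $3$ whenever $u_j\in s_i$; since coverage counts a pair if at least one shortest path passes through $a$, the pair $(b,u_j)$ is covered iff some selected set contains $u_j$---this is what collapses overlaps to a union (note the strictness you ask for in (c) is unnecessary; ties suffice under the coverage definition). A clique on the set-nodes then controls the side effects: adding any edge covers $(b,s)$ for all $m$ nodes $s\in S$, and each added edge $(a,p)$ covers $(c,p)$, yielding the count $n+m+k$ iff $S'$ is a set cover of size $k$. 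This also shows your parenthetical ``additive constant that depends only on the construction'' is not quite right: the $(c,p)$ side pairs contribute a term depending on $|E_s|$, and the backward direction of the equivalence needs the explicit accounting that $m+k$ pairs come for free while the remaining $n$ pairs are attainable only by a cover. Without a worked-out gadget establishing these distance calculations, the claimed equivalence $C_m(\{a\})=n \Leftrightarrow$ cover exists is not yet a proof.
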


\begin{proof}
Consider an instance of the NP-complete Set Cover problem, defined by a collection of subsets $S_{1},S_{2},...,S_{m}$ for a universal set of items $U=\{ u_{1},u_{2},...,u_{n} \}$. The problem is to decide whether there exist $k$ subsets whose union is $U$. To define a corresponding CCO instance, we construct an undirected graph with $m+2n+3$ nodes: there are nodes $i$ and $j$ corresponding to each set $S_{i}$ and each element $u_{j}$ respectively, and an undirected edge $(i,j)$ whenever $u_{j}\in S_{i}$. Every $S_{i}$ has an edge with $S_{j}$ when $i \neq j$ and  $i,j \in {1,2,...,m}$. The set $T =\{t_1,t_2,...,t_n\}$ is a copy of set $U$ where $u_{i}$ is connected to the corresponding $t_{i}$ for all $i \in {1,2,...,n}$. Three more nodes ($a,b$ and $c$) are added to the graph where $a$ is in $X$
. Node $c$ is connected to $t_{i}$ for all $i \in {1,2,...,n}$. Node $b$ is attached to $a$ and $c$. Figure \ref{fig:nphard} shows an example of this construction. The reduction clearly takes polynomial time. The candidate set $\Gamma$ consists of the edges between $a$ and set $S$. We prove that CCO of a given singleton set is NP-hard by maximizing the Coverage Centrality (CC) of the node $a$. Current CC of $a$ is $0$ by construction. 

A set $S'\subset S$, with $|S'|\leq k$ is a set cover iff the CC of $a$ becomes $n+m+k$ after adding the edges between $a$ and every node in $S'$. 
Assume that $S'$ is a set cover and edges are added between node $a$ and every node in $S'$. 
Then the CC of $a$ improves by $m+n+k$ as shortest paths between pairs $(b,s)$, $\forall s \in S$, $(b,u)$, $\forall u\in U$, and $(c,p)$, $\forall p\in S'$, will now pass through $a$. On the other hand, assume that the
CC of node $a$ is $m+n+k$ after adding edges between  $a$ and any set $S'\subset S$. 
It is easy to see that $m+k$ extra pairs will have their shortest paths covered by $a$. However, the only way to add another $n$ pairs is by making $S'$ a set cover. 
\end{proof}

Given that computing an optimal solution for CCO is infeasible in practice, a natural question is whether it has a polynomial-time approximation. The next theorem shows that CCO is also NP-hard to approximate within a factor greater than $(1- \frac{1}{e})$. Interestingly, different from its search counterpart \cite{yoshida2014}, CCO is not submodular (see Section \ref{sec:general_settings}). These two results provide strong evidence that, for group centrality, network design is strictly harder than search.


 

\begin{thm}\label{thm:inapprox} CCO cannot be approximated within a factor greater than $(1- \frac{1}{e})$.
\end{thm}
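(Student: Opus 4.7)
The plan is to reduce from Maximum $k$-Coverage, which by Feige's classical result cannot be approximated in polynomial time within any factor greater than $(1-1/e)$ unless $P=NP$. The construction in Theorem~\ref{thm:nphard} already maps a Set Cover instance to a CCO instance; my starting point is to repurpose essentially the same construction as a reduction from Max Coverage, then scale it so that the approximation factor transfers with negligible loss.

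First I would reuse the graph from Theorem~\ref{thm:nphard}: given a Max Coverage instance $(U=\{u_1,\dots,u_n\},\{S_1,\dots,S_m\},k)$, build the graph on $m+2n+3$ vertices with target set $X=\{a\}$ and candidate edges $\Gamma=\{(a,S_i):i\in[m]\}$. Picking $k$ candidate edges is in bijection with picking $k$ subsets. A shortest-path case analysis, essentially the one implicit in Theorem~\ref{thm:nphard}, gives
\begin{equation*}
C_m(a) \;=\; m + k + f(S'),
\end{equation*}
where $S'$ is the chosen family and $f(S')=|\bigcup_{S\in S'} S|$ is the Max Coverage objective. The additive $m+k$ arises from pairs $(b,S_i)$ for every $i$ and $(c,S_j)$ for $S_j\in S'$ that acquire a shortest path through $a$ ``for free''; the genuinely variable part $f(S')$ counts the pairs $(b,u_j)$ whose element $u_j$ is covered by $S'$.

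Second, to prevent the $m+k$ additive overhead from diluting the transferred approximation ratio, I would scale the element side of the construction: duplicate each $u_j$ and its matching $t_j$ into $N$ parallel copies attached to the same sets in $S$ and to $c$, for a sufficiently large polynomial $N=N(n,m)$. After scaling, $C_m(a)=m+k+N\cdot f(S')$ and $\textnormal{OPT}_{CCO}=m+k+N\cdot\textnormal{OPT}_{MC}$. If a polynomial-time $(1-1/e+\delta)$-approximation for CCO existed for some constant $\delta>0$, running it on the scaled instance would return $S'$ with
\begin{equation*}
f(S') \;\geq\; (1-1/e+\delta)\,\textnormal{OPT}_{MC} \;-\; \frac{(1/e-\delta)(m+k)}{N}.
\end{equation*}
Choosing $N$ large enough (using $\textnormal{OPT}_{MC}\geq 1$) makes the second term smaller than $(\delta/2)\,\textnormal{OPT}_{MC}$, yielding a $(1-1/e+\delta/2)$-approximation for Max Coverage and contradicting Feige's lower bound.

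The step I expect to be the main obstacle is re-verifying the shortest-path case analysis under the scaled construction: introducing many parallel copies of $u_j$ and $t_j$ could conceivably create new short routes (for example, two-hop detours through $c$ or through a shared set in $S$) that either route extra pairs through $a$ or create new shortest paths avoiding $a$. I would need to enumerate, for every pair of vertex types in $\{a,b,c\}\cup S\cup U\cup T$, both the pre- and post-augmentation distances and confirm that the only pairs whose shortest paths newly pass through $a$ are precisely the three families identified above. This is essentially the case analysis sketched but abbreviated in the NP-hardness proof, redone under the $N$-copy blow-up.
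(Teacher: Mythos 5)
Your proof is correct, but it takes a genuinely different route from the paper's. The paper also reuses the gadget from Theorem~\ref{thm:nphard}, but instead of neutralizing the additive $m+k$ term by a blow-up, it exploits the fact that the pair set $Z$ is part of the CCO input: it sets $Z=\{(b,u)\,:\,u\in U\}$, so the coverage gain of any solution is exactly $2\,s(T^M)$ with no additive offset, and it packages the argument as an $L$-reduction with parameters $x=2$, $y=\tfrac{1}{2}$, so that $xy=1$ and the $(1-\tfrac{1}{e})$ bound transfers immediately---no scaling, no redone case analysis. Your approach pays for avoiding that device: you need the $N$-copy duplication, a quantitative gap-transfer computation in place of a clean $L$-reduction, and a full re-verification of the distances under the blow-up. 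That verification does go through: any shortest path via $a$ must use two edges incident to $a$; the segment $S_i$--$a$--$S_j$ is strictly dominated by the clique edge $(S_i,S_j)$; hence the only newly covered pairs are $(b,S_i)$ for all $i$, $(c,S_i)$ for chosen $S_i$, and $(b,u_j^{(l)})$ for covered element copies, giving $C_m(a)=m+|S'|+N f(S')$ as you claim (you should assume, w.l.o.g., that every $S_i$ is nonempty, since an empty unchosen $S_j$ would create a distance tie making $a$ cover $(c,S_j)$; this only perturbs the additive term and is harmless). Your monotonicity claim that the optimum uses exactly $k$ edges also holds because every candidate edge is incident to the target $a$, so added edges can only create shortest paths through $a$. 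What your version buys is a formally stronger conclusion: the paper's hard instances use a nontrivially restricted $Z$, whereas your reduction establishes the same $(1-\tfrac{1}{e})$ lower bound for the default all-pairs setting $Z=V\setminus X\times V\setminus X$ that the paper adopts ``for simplicity'' throughout---a special case the paper's proof does not directly address. The cost is length and the obligation to actually carry out the case analysis you flagged, which the paper's restricted-$Z$ trick sidesteps entirely.
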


\begin{proof}
We give an $L$-reduction \cite{williamson2011design} from the maximum coverage (MSC) problem with parameters $x$ and $y$. Our reduction is such that following two equations are satisfied:
 \begin{equation}
     OPT(I_{CCO}) \leq xOPT(I_{MSC})
     \vspace{-2mm}
 \end{equation}
  \begin{equation}
     OPT(I_{MSC})-s(T^M) \leq y(OPT(I_{CCO})-s(T^C))
 \end{equation}
where $I_{MSC}$ and $I_{CCO}$ are problem instances, and $OPT(Y)$ is the optimal value for instance $Y$. $s(T^M)$ and $s(T^C)$ denote any solution of the MSC and CCO instances respectively. If the conditions hold and CCO has an $\alpha$ approximation, then MSC has an $(1-xy(1-\alpha))$ approximation. However, MSC is NP-hard to approximate within a factor greater than $(1-\frac{1}{e})$. It follows that $(1-xy(1-\alpha))< (1-\frac{1}{e})$, or, $\alpha < (1-\frac{1}{xye})$ \cite{crescenzi2015}. So, if the  conditions are satisfied, CCO is NP-hard to approximate within a factor greater than $(1-\frac{1}{xye})$. 

 We use the same construction as in Theorem \ref{thm:nphard}. For CCO, the set $Z$ contains pairs in the form $(b,u)$, $u \in U$. Let the solution of $I_{CCO}$ be $s(T^C)$. The centrality of node $a$ will increase by $s(T^C)$ to cover the pairs in $Z$. Note that $s(T^C)= 2s(T^M)$ from the construction (as the graph is undirected, the covered pair is unordered). It follows that both the conditions are satisfied when $x=2$ and $y=\frac{1}{2}$. So, CCO is NP-hard to approximate within a factor grater than $(1-\frac{1}{e})$.
 
 \end{proof}

Theorem \ref{thm:inapprox} shows that there is no polynomial-time approximation better than $(1- \frac{1}{e})$ for CCO. Given such an inapproximation result, we propose an efficient greedy heuristic for our problem, as discussed in the next section.

\section{Algorithms}
\label{sec:greedy}
%

\subsection{Greedy Algorithm (GES)}
\label{sec:greedy_algo}

 Algorithm \ref{algo:GES} (GES) is a simple greedy strategy that selects the best edge to be added in each of the $k$ iterations,  where $k$ is the budget. Its most important steps are $2$ and $7$. In step $2$, it computes all-pair-shortest-paths in time $O(n(m+n))$. Next, it chooses, among the candidate edges $\Gamma$, the one that maximizes the marginal coverage centrality gain of $X$  (step $7$), which takes $O(|\Gamma|n^2)$ time. After adding the best edge, the shortest path distances are updated. Then, the algorithm checks the pairwise distances in $O(n^2)$ time (step $9$). The total running time of GES is $O(n(m+n)+ k|\Gamma|n^2)$.

\begin{algorithm}[ht]
\small
 \caption {Greedy Edge Set (GES)\label{algo:GES}}
\begin{algorithmic}[1] 
 \REQUIRE Network $G=(V,E)$, target node set $X$, Candidate set \newline of edges $\Gamma$, Budget $k$
 \ENSURE A subset $E_s$ from $\Gamma$ of $k$ edges 
\STATE $E_s\leftarrow\emptyset$
\STATE Compute all pair shortest paths and store the distances
\WHILE {  $|E_s|\leq k$ }
\FOR{$e \in \Gamma \setminus E_s$}
\STATE $\textit{Count(e)} \leftarrow$ $\#$ newly covered pairs after adding $e$
\ENDFOR
\STATE $e^*\leftarrow \argmax_{e\in \Gamma \setminus E_s}\{Count(e)\}$
\STATE $E_s\leftarrow E_s\cup e^*$ and $E\leftarrow E\cup e^*$ 
\STATE Update the shortest path distances
\ENDWHILE
 \RETURN $E_s$
\end{algorithmic}
\end{algorithm}

We illustrate the execution of GES on the graph from Figure \ref{fig:ex11} for a budget $k=2$, a candidate set of edges $\Gamma= \{(d,a),$ $(d,b), (f,b)\}$, and a target set $X=\{d,f\}$. Initially, adding $(d,a), (d,b)$ and $(f,b)$ increases the centrality of $X$ by $3$, $0$, and $2$, respectively, and thus $(d,a)$ is chosen. In the second iteration, $(d,b)$ and $(f,b)$ increase the centrality of $X$ by $0$ and $1$, respectively, and $(f,b)$ is chosen.

\subsection{Sampling Algorithm (BUS)}
\label{sec::approximation_algorithms}

The execution time of GES increases with $|\Gamma|$ and $m$. In particular, if $m = O(n^2)$ and $|\Gamma|=O(n)$, the complexity reaches $O(n^3)$, which is prohibitive for large graphs. To address this challenge, we propose a sampling algorithm 
that is nearly optimal, regarding each greedy edge choice, with probabilistic guarantees (see Section \ref{sec:analysis_sampling}).
 Instead of selecting edges based on all the uncovered pairs of vertices, our scheme does it based on a small number of sampled uncovered pairs. 
This strategy allows the selection of edges with probabilistic guarantees using a small number of samples, thus ensuring scalability to large graphs. We show that the error in estimating the coverage based on the samples is small. 




Algorithm \ref{algo:BUS} (Best Edge via Uniform Sampling, or BUS) is a sampling strategy to select the best edge to be added in each of the $k$ iterations based on the sampled uncovered node pairs. For each pair of samples, we compute the distances from each node in the pair to all others. These distances 
estimate the number of covered pairs after the addition of one edge. In Section \ref{sec:analysis_sampling}, we provide a theoretical analysis of the approximation achieved by BUS.

The costliest steps of our algorithm are 4-7 and 8-10. Steps 4-7, where the algorithm performs shortest-path computations, take $O(q(n+m))$ time. Next, the algorithm estimates the additional number of shortest pairs covered by $X$ after adding each of the edges based on the samples (steps 8-10) in $O(|\Gamma|q^2)$ time. Given such an estimate, the algorithm chooses the best edge to be added (step 11). The total running time of BUS is $O(kq(m+n)+ k|\Gamma|q^2)$.


\begin{algorithm}[t]
\small
 \caption {Best Edge via Uniform Sampling (BUS)\label{algo:BUS}}
\begin{algorithmic}[1] 
\REQUIRE Network $G=(V,E)$, target node set $X$, Candidate set \newline of edges $\Gamma$, Budget $k$
 \ENSURE A subset $\gamma$ from $\Gamma$ of $k$ edges  
\STATE Choose $q$ pairs of vertices in $Q$ from $M_u$
\STATE $\gamma \leftarrow \emptyset$
\WHILE {  $|\gamma|\leq k$ }
\FOR {  $(s,t)\in Q$ }
\STATE Compute and store s.p. distance $d(s,v)$ (for all $v\in V$)
\STATE  Compute and store s.p. distance $d(t,v)$ (for all $v\in V$)
\ENDFOR
\FOR{ $e \in \Gamma \setminus \gamma$}
\STATE $score_{e}\leftarrow \#$ newly covered pairs after adding $e$
\ENDFOR
\STATE $e^* \leftarrow \argmax_{e'\in \Gamma \setminus \gamma}\{score_{e'}\}$
\STATE $ \gamma \leftarrow \gamma\cup e^*$ and $E\leftarrow E\cup e^*$ 
\ENDWHILE
\STATE Return $\gamma$
\end{algorithmic}
\end{algorithm}

\section{Analysis}


In the previous section, we described a greedy heuristic and an efficient algorithm to approximate the greedy approach. Next, we show, under some realistic assumptions, the described greedy algorithm provides a constant-factor approximation for a modified version of CCO. More specifically, our approximation guarantees are based on the addition of two extra constraints to the general CCO described in Section \ref{sec:problem_definition}. 

\subsection{Constrained Problem}
\label{sec:analysis_const}

The extra constraints, $S^1$ and $S^2$, considered are the following: (1) $S^1$: We assume that edges are added from the target set $X$ to the remaining nodes, i.e. edges in  a given candidate set $\Gamma$ have the form $(a,b)$ where $a\in X$ and $b \in V\setminus X$ \cite{crescenzi2015}; and (2) $S^2$: Each pair $(s,t)$ can be covered by at most one single newly added edge \cite{chaoji2012recommendations, meyerson2009}.

$S^1$ is a reasonable assumption in many applications. For instance, in online advertising, adding links to a third-party page gives away control over the navigation, which is undesirable. 
$S^2$ is motivated by the fact that, in real-life graphs, vertex centrality follows a skewed distribution (e.g. power-law), and thus most of the new pairs will have shortest paths through a single edge in $\Gamma$. 
In our experiments (see Table \ref{table:motivation} in Section \ref{sec:exp_rcco_vs_cco}), we show that, in practice, solutions for the constrained and general problem are not far from each other. 
Both constraints have been considered by previous work \cite{crescenzi2015,chaoji2012recommendations,meyerson2009}.
 Next, we show that COO under constraints $S^1$ and $S^2$, or RCCO (Restricted CCO), for short, is still NP-hard.

\begin{cor}\label{cor:nphard_c} RCCO is NP-hard.
\end{cor}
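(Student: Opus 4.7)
The plan is to reuse the same Set Cover reduction constructed in the proof of Theorem~\ref{thm:nphard} and simply verify that the produced CCO instance already satisfies the two additional constraints $S^1$ and $S^2$, so that it is in fact a valid RCCO instance. If so, NP-hardness of RCCO is immediate from NP-completeness of Set Cover, and no new reduction is needed.

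Constraint $S^1$ follows by inspection: in the Theorem~\ref{thm:nphard} gadget the target set is $X = \{a\}$ and every candidate edge in $\Gamma$ has the form $(a, s_i)$ with $s_i \in S \subseteq V \setminus X$, so all candidate edges already lie between $X$ and $V \setminus X$, as required.

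Constraint $S^2$ requires a short case analysis over the pairs that become newly covered by $a$ after adding a set of edges $\{(a,s): s \in S'\}$. The Theorem~\ref{thm:nphard} argument accounts for three families of newly covered pairs, namely $(b, s)$ for $s \in S$, $(b, u)$ for $u \in U$, and $(c, p)$ for $p \in S'$. I would check, family by family, that each such shortest path through $a$ uses exactly one newly added edge of $\Gamma$. If the original gadget does not already have this property in every case (for instance, the clique on $S$ could enable alternative length-three paths of the form $b$--$a$--$s_j$--$s_i$ when $s_i$ is an uncovered set node), I would apply a minor structural fix, such as removing the $S$-clique and replacing it with an isolated-node structure, or restricting the pair set $Z$ to those pairs whose coverage is intrinsically attributable to a single edge. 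Either fix is local to the gadget and preserves the bijection between set covers of size $k$ in the Set Cover instance and edge sets of size $k$ that boost $C(a)$ by $m+n+k$ in the RCCO instance.

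The main obstacle is precisely the $S^2$ verification: I have to make sure that no counted pair can be credited to two different newly added edges, and if the base construction allows this, tighten it without breaking the Set Cover equivalence. Once $S^1$ and $S^2$ are both established, the ``iff'' connecting a $k$-set cover of $U$ with an edge set of size $k$ that achieves centrality gain $m+n+k$ carries over verbatim from Theorem~\ref{thm:nphard}, and RCCO is NP-hard.
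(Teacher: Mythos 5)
Your skeleton is exactly the paper's: the paper proves this corollary in one line by observing that the Theorem~\ref{thm:nphard} construction already respects $S^1$ and $S^2$, and your $S^1$ check is correct. The genuine problem is that you have misread $S^2$. It does not demand that each pair be coverable by at most one candidate edge (uniqueness of the covering edge); it demands that a pair be covered \emph{using} at most one newly added edge, i.e., no covering shortest path may contain two or more edges of $\Gamma$ simultaneously --- compare the restatement ``a pair is covered using at most one newly added edge'' in Section~\ref{sec:general_settings}, and its role in the proof of Theorem~\ref{thm:submodular}, where it rules out pairs whose coverage requires the joint presence of several new edges. Under that reading your worrying case is harmless: the path $b$--$a$--$s_j$--$s_i$ contains exactly one new edge, $(a,s_j)$, so the fact that several different single edges can each cover $(b,s_i)$ is perfectly consistent with $S^2$. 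Indeed every covering shortest path in the gadget ($b$--$a$--$s$; $b$--$a$--$s_j$--$s_i$; $b$--$a$--$s_j$--$u$; $c$--$b$--$a$--$p$) uses exactly one edge of $\Gamma$, while any path through $a$ using two new edges, necessarily of the form $x\,$--$\,s_i$--$a$--$s_j\,$--$\,y$, is never shortest (e.g., $d(s_i,s_j)=1$ via the clique, $d(u,s_j)\leq 2$, $d(u,u')\leq 3$, all strictly shorter than the corresponding two-new-edge alternatives). So the construction satisfies $S^2$ verbatim and no repair is needed.

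A sanity check that your stricter reading cannot be the intended one: under it, the pair $(b,u)$ is coverable by $(a,s_j)$ for \emph{every} set $S_j\ni u$, so no Set-Cover-style reduction could ever comply --- overlapping sets are the essence of Set Cover --- and the objective would essentially decompose over edges, contradicting Corollary~\ref{thm:approx_lower}'s $(1-\frac{1}{e})$-inapproximability for RCCO, whose proof reuses the same gadget with the same one-line compliance claim. Your fallback fixes also do not hold up: removing the clique on $S$ changes $d(b,s_i)$ so that $(b,s_i)$ with $s_i\notin S'$ is covered only when $S_i$ intersects $\bigcup_{S_j\in S'}S_j$, destroying the clean $m+n+k$ count, so the claim that the bijection ``carries over verbatim'' is false; restricting $Z$ to the pairs $(c,p)$ yields gain exactly $|S'|$ for every choice of $S'$, severing the link to Set Cover; and restricting $Z$ to the pairs $(b,u)$ leaves the multi-edge coverability you were trying to eliminate. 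In short: same approach as the paper, correct $S^1$ verification, but the $S^2$ step as you framed it verifies the wrong property, and the contingency plan would break the reduction rather than save it.
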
 
\begin{proof}
Follows directly from Theorem \ref{thm:nphard}, as the construction applied in the proof respects both the constraints.
 \end{proof}

 
 \subsection{Analysis: Greedy Algorithm}
 \label{sec:analysis_greedy}
 
 The next theorem shows that RCCO's optimization function is  monotone and submodular.  As a consequence, the greedy algorithm described in Section~\ref{sec:greedy_algo} leads to a well-known constant factor approximation of $(1-1/e)$~\cite{nemhauser1978}. 
 

\begin{thm}\label{thm:submodular} The objective function $f(E_s)=C_m(X)$ in RCCO is monotone and submodular. 
\end{thm}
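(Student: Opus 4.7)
The plan is to establish monotonicity and submodularity separately, with constraint $S^1$ doing the work for the first property and constraint $S^2$ for the second. The key observation I would exploit is that under $S^2$ the objective collapses to a set-cover-style expression $f(E_s) = |C_G \cup \bigcup_{e \in E_s} N(e)|$, where $C_G$ is the set of pairs in $Z$ already covered by $X$ in $G$ and $N(e)$ is the set of additional pairs covered when $e$ alone is added to $G$. Functions of this form are standard coverage functions and hence both monotone and submodular, so the work is really in justifying the rewriting.

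For \emph{monotonicity}, I would take $A \subseteq B \subseteq \Gamma$ and argue that every pair $(s,t) \in Z$ covered in $G \cup A$ remains covered in $G \cup B$. Either $d_{G \cup A}(s,t) = d_{G \cup B}(s,t)$, in which case the original shortest path through $X$ is still a shortest path in $G \cup B$ and witnesses coverage, or $d_{G \cup B}(s,t) < d_{G \cup A}(s,t)$, in which case any new shortest path must use at least one edge of $B \setminus A$; by constraint $S^1$ each such edge has an endpoint in $X$, so the new path necessarily passes through $X$. Either way $(s,t)$ stays covered, giving $f(A) \le f(B)$.

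For \emph{submodularity}, the core step is to rewrite $f$ as a coverage function using $S^2$. With $N(e)$ defined as above, constraint $S^2$ says that whenever a pair $(s,t)$ is newly covered in $G \cup E_s$, the coverage is witnessed by a single newly added edge, so the set of newly covered pairs in $G \cup E_s$ is precisely $\bigcup_{e \in E_s} N(e)$. Hence $f(E_s) = |C_G| + |\bigcup_{e \in E_s} N(e) \setminus C_G|$, which differs from a standard coverage function only by an additive constant. The usual inclusion-exclusion argument then gives, for $A \subseteq B$ and $e \notin B$,
\begin{equation*}
\bigl|N(e) \setminus (C_G \cup \textstyle\bigcup_{e' \in A} N(e'))\bigr| \;\ge\; \bigl|N(e) \setminus (C_G \cup \textstyle\bigcup_{e' \in B} N(e'))\bigr|,
\end{equation*}
which is exactly the diminishing-returns inequality $f(A \cup \{e\}) - f(A) \ge f(B \cup \{e\}) - f(B)$.

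The step I expect to be the main obstacle is justifying the rewriting of $f$ as a coverage function, i.e., showing that $S^2$ rules out the one scenario in which submodularity would genuinely fail: a pair $(s,t)$ whose shortest path only becomes favorable once two new edges $(a_1,b_1), (a_2,b_2) \in \Gamma$ have both been added (via a route $s \rightsquigarrow b_1 \to a_1 \rightsquigarrow a_2 \to b_2 \rightsquigarrow t$), so that the marginal gain of the second edge exceeds that of the first. Making explicit that $S^2$ forbids precisely this kind of synergistic coverage---so that each newly covered pair can be attributed to a unique responsible edge---is what allows the reduction to a coverage function, after which monotonicity and submodularity are routine.
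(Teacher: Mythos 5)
Your proposal is correct and follows essentially the same route as the paper's proof: monotonicity via $S^1$ (any strictly shorter new path must use an added edge with an endpoint in $X$, so coverage is never lost), and submodularity via the unique-edge attribution guaranteed by $S^2$, which is precisely the paper's claim that $F(E_b\cup \{e\})\setminus F(E_b) \subseteq F(E_a\cup \{e\})\setminus F(E_a)$. Your explicit rewriting of $f$ as the coverage function $|C_G \cup \bigcup_{e\in E_s} N(e)|$ is just a cleaner packaging of the same argument, with the paper leaving the coverage-function structure implicit.
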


\begin{proof}
\underline{Monotonicity:} Follows from the definition of a shortest path. Adding an edge $(u,v) \in E_s$ cannot increase $d(s,t)$ for any $(s,t)$ already covered by $X$. Since $u\in X$ for any $(u,v) \in E_s$, the coverage $C_m(X)$ is also non-decreasing. 

\underline{Submodularity:} We consider addition of two sets of edges, $E_a$ and $E_b$ where $E_a \subset E_b$, and show that $f(E_a \cup \{e\}) - f(E_a) \geq f(E_b\cup \{e\})-f(E_b)$ for any edge $e \in \Gamma$ such that $e \notin E_a$ and $e \notin E_b$. Let $F(A)$ be the set of node pairs $(s,t)$ which are covered by an edge $e \in A$ ($|F(E_s)| = C_m(X)$). Then $f(.)$ is submodular if $F(E_b\cup \{e\})\setminus F(E_b) \subseteq F(E_a\cup \{e\}) \setminus F(E_a)$. To prove this claim, we make use of $S^B$. Therefore, each pair $(s,t) \in F(E_b)$ is covered by only one edge in $E_b$. As $E_a \subset E_b$, adding $e$ to $E_a$ will cover some of the pairs which are already covered by $E_b\setminus E_a$. Then, for any newly covered pair $(s,t) \in F(E_b\cup \{e\})\setminus F(E_b)$, it must hold that $(s,t) \in F(E_a\cup \{e\})\setminus F(E_a)$.
\end{proof}

Based on Theorem \ref{thm:submodular}, if $OPT$ is the optimal solution for an instance of the RCCO problem, GES will return a set of edges $E_s$ such that $f(E_s) \geq (1-1/e)OPT$. The existence of such an approximation algorithm shows that the constraints $S^1$ and $S^2$ make the CCO problem easier, compared to its general version. On the other hand, whether GES is a good algorithm for the modified CCO (RCCO) remains an open question. In order to show that our algorithm is optimal, in the sense that the best algorithm for this problem cannot achieve a better approximation from those of GES, we also prove an inapproximability result for the constrained problem. 


\begin{cor}\label{thm:approx_lower} RCCO cannot be approximated within a factor greater than $(1- \frac{1}{e})$.
\end{cor}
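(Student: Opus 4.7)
The strategy is to reuse the L-reduction from Maximum Set Cover (MSC) that established Theorem~\ref{thm:inapprox}, and to verify that the produced instances also satisfy the two restrictions $S^1$ and $S^2$ defining RCCO. Since RCCO is a strict subclass of CCO, the inapproximability of CCO does not transfer automatically---the hard instances must themselves lie in RCCO, so this verification is the whole point.

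First, I would confirm $S^1$ by inspection of the construction of Theorem~\ref{thm:inapprox}: the target set is $X=\{a\}$, and every candidate edge has the form $(a,S_i)$ with $S_i\in V\setminus X$, so every candidate edge runs from the target vertex to a non-target vertex.

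Next, I would verify $S^2$: no pair in $Z=\{(b,u_j):u_j\in U\}$ may be covered by more than one newly added edge simultaneously. The delicate case is when $u_j\in S_i\cap S_k$ and both $(a,S_i)$ and $(a,S_k)$ are added, since then the shortest path from $b$ to $u_j$ through $a$ can be realized via either edge. To enforce $S^2$, I would lightly modify the gadget---for example, by subdividing each edge $(S_i,u_j)$ via a short unique intermediate vertex and symmetrically lengthening the alternate route through $c$ so that all relevant shortest-path distances remain equal, while each $(b,u_j)\in Z$ has its unique shortest path through $a$ routed via exactly one candidate edge. Crucially, this modification is designed so that the set of added edges corresponds bijectively to a set cover, keeping the L-reduction parameters $x=2$ and $y=1/2$ of Theorem~\ref{thm:inapprox} intact.

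With both restrictions established, a polynomial-time $\alpha$-approximation for RCCO with $\alpha>1-\tfrac{1}{e}$ would, via the same L-reduction argument used in Theorem~\ref{thm:inapprox}, yield an MSC approximation strictly better than $1-\tfrac{1}{e}$, contradicting Feige's inapproximability threshold.

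The main obstacle will be designing the gadget carefully enough to enforce $S^2$ for \emph{every} subset $E_s\subseteq\Gamma$ (not only optimal ones) while simultaneously preserving the linear correspondence between covered MSC elements and covered pairs in $Z$ that the L-reduction relies on. The constraints $S^1$ is cosmetic given the original construction, but designing the shortest-path structure so that the covering edge for each pair is unique while leaving the objective values linearly related is the technical heart of the argument.
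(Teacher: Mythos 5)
Your overall frame is the paper's: the paper proves this corollary in one line, observing that the construction of Theorem~\ref{thm:inapprox} already respects $S^1$ and $S^2$, so your plan of re-checking the two constraints on that construction is exactly right, and your $S^1$ check is correct. The genuine problem is your reading of $S^2$. In this paper $S^2$ does not say that each pair in $Z$ can be covered by at most one \emph{distinct} candidate edge; it says that a covering shortest path \emph{uses} at most one newly added edge (see the restatement in Section~\ref{sec:general_settings}, ``a pair is covered using at most one newly added edge,'' and the non-submodularity counterexample it is designed to rule out, where covering $(a,b)$ requires the two new edges $(x,a)$ and $(x,b)$ simultaneously). In the Theorem~\ref{thm:inapprox} construction every covering path for a pair $(b,u_j)$ has the form $b$--$a$--$S_i$--$u_j$ and contains exactly one edge of $\Gamma$ (all candidate edges are incident to $a$, and no shortest path traverses two of them), so $S^2$ holds verbatim and your ``delicate case'' $u_j\in S_i\cap S_k$ is a non-issue: two alternative covering paths, each using a single new edge, are perfectly legal.

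Worse, the repair you propose would sink the proof if carried out. If you truly forced every pair of $Z$ to be coverable by exactly one candidate edge, for \emph{every} $E_s\subseteq\Gamma$, then the coverage sets of the candidate edges would be pairwise disjoint, the objective would become additive, and such instances would be solvable exactly in polynomial time by taking the $k$ individually best edges---so they cannot carry the $(1-\frac{1}{e})$ MSC hardness, and no gadget can simultaneously enforce your uniqueness condition and keep ``the set of added edges corresponds bijectively to a set cover.'' Your concrete sketch also fails on its own terms: subdividing each $(S_i,u_j)$ and re-equalizing the route through $c$ still leaves $b$--$a$--$S_i$--$w_{ij}$--$u_j$ and $b$--$a$--$S_k$--$w_{kj}$--$u_j$ as equal-length covering paths whenever $u_j\in S_i\cap S_k$. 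With the correct reading of $S^2$ no modification is needed, and the remainder of your argument---$S^1$ by inspection, then the L-reduction with $x=2$, $y=\frac{1}{2}$ against the MSC threshold---is exactly the paper's.
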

 \begin{proof}
 Follows directly from Theorem \ref{thm:inapprox}, as the construction applied in the proof respects both the constraints.
 \end{proof}

 Corollary \ref{thm:approx_lower} certifies that GES achieves the best approximation possible for the constrained CCO (RCCO) problem.

\subsection{Analysis: Sampling Algorithm}
\label{sec:analysis_sampling}

In Section \ref{sec::approximation_algorithms}, we presented BUS, a fast sampling algorithm for the general CCO problem. 
Here, we study the quality of the approximation provided by BUS as a function of the number of sampled node pairs. The analysis will assume the constrained version of CCO (RCCO), but approximation guarantees regarding the general case will also be discussed.

Let us assume that $X$ covers a set $M_c$ of pairs of vertices. The set of remaining vertex pairs is $M_u$,  $M_u=\{(s,t)|s \in V,t\in V, s\neq t, X\cap P_{st}=\emptyset\},m_u=|M_u|=n(n-1)-|M_c|$.  We sample, uniformly with replacement, a set of ordered vertex pairs $Q$ ($|Q|=q$) from all vertex pairs ($M_u$) not covered by $X$. Let $g^q(.)$ denote the number of \emph{newly} covered pairs by the candidate edges based on the samples $Q$.
Moreover, for an edge set $\gamma \subset \Gamma$, let $X_i$ be a random variable which denotes whether the $i$th sampled pair is covered by any edge in $\gamma$. In other words, $X_i=1$ if the pair is covered and $0$, otherwise. Each pair is chosen with probability $\frac{1}{m_u}$ uniformly at random.



\begin{lemma} \label{lemma:expectation}
Given a size $q$ sample of node pairs from $M_u$: 
\begin{equation*}
E(g^q(\gamma))= \frac{q}{m_u}f(\gamma)
\end{equation*}
\end{lemma}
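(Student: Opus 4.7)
The plan is to prove the identity by linearity of expectation, viewing $g^q(\gamma)$ as a sum of indicator variables over the $q$ sampled pairs. Since the samples from $M_u$ are drawn uniformly with replacement, each of the $q$ indicators $X_i$ is identically distributed, so the overall expectation reduces to $q$ times a single pair's probability of being newly covered.

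First I would write $g^q(\gamma) = \sum_{i=1}^{q} X_i$, where $X_i = 1$ exactly when the $i$th sampled pair lies on a shortest path newly routed through $X$ after adding the edges $\gamma$, and $0$ otherwise. Because the $i$th sample is chosen uniformly from $M_u$, we have $P(X_i = 1) = N(\gamma)/m_u$, where $N(\gamma)$ denotes the number of previously uncovered pairs in $M_u$ that become covered by $X$ in $G_m$ after adding $\gamma$. By the definition of $f$ in RCCO (namely, the number of newly covered pairs relative to the initial graph), $N(\gamma) = f(\gamma)$, so $E(X_i) = f(\gamma)/m_u$.

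Next, applying linearity of expectation gives
\begin{equation*}
E(g^q(\gamma)) \;=\; \sum_{i=1}^{q} E(X_i) \;=\; q \cdot \frac{f(\gamma)}{m_u} \;=\; \frac{q}{m_u} f(\gamma),
\end{equation*}
which is exactly the claimed identity. The only subtlety is making sure that $N(\gamma) = f(\gamma)$: this is where the fact that we restrict sampling to $M_u$ (the pairs not already covered by $X$ in $G$) matters, because pairs already in $M_c$ contribute neither to $f(\gamma)$ (they were already covered) nor to any $X_i$ (they are outside the sampling universe). There is no real obstacle here; it is a clean linearity-of-expectation computation, and the care required is purely bookkeeping to ensure that "newly covered" is interpreted consistently on both sides of the equation.
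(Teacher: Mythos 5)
Your proof is correct and follows essentially the same route as the paper's: writing $g^q(\gamma)=\sum_{i=1}^{q}X_i$ with i.i.d.\ indicators satisfying $P(X_i=1)=f(\gamma)/m_u$ under uniform sampling from $M_u$, then applying linearity of expectation. Your extra bookkeeping confirming $N(\gamma)=f(\gamma)$ (pairs in $M_c$ contribute to neither side) is a slightly more careful spelling-out of what the paper leaves implicit, but it is the same argument.
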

From the samples, we get $g^q(\gamma)=\Sigma_{i=1}^{q}X_i$.
By the linearity and additive rule, $E(g^q(\gamma))=\Sigma_{i=1}^{q}E(X_i)=q.E(X_i).$
As the probability $P(X_i)= \frac{f(\gamma)}{m_u} $ and $X_i$s are i.i.d., $E(g^q(\gamma))=\frac{q}{m_u}f(\gamma).$
Also, let $f^q= \frac{m_u}{q}g^q$ be the estimated coverage. 

\begin{lemma} \label{lemma:anyE}
Given $\epsilon$ $(0<\epsilon<1)$, a positive integer $l$, a budget $k$, and a sample of independent uncovered node pairs $Q, |Q|=q$, where 
$q(\epsilon)\geq \frac{3m_u(l+k)log(|\Gamma|)}{\epsilon^2 \cdot OPT}$; then:
\begin{equation*}
Pr (|f^q(\gamma)- f(\gamma)| < \epsilon \cdot OPT) \geq 1- 2|\Gamma|^{-l}
\end{equation*}
For all $\gamma \subset \Gamma$, $|\gamma| \leq k$, where $OPT$ denotes the optimal coverage $(OPT= Max\{f(\gamma)|\gamma\subset \Gamma, |\gamma|\leq k\})$.
\end{lemma}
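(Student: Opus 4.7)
The plan is to apply a multiplicative Chernoff concentration inequality to the sum $g^q(\gamma) = \sum_{i=1}^{q} X_i$ for a fixed $\gamma$, and then take a union bound over all candidate edge subsets of size at most $k$. By Lemma~\ref{lemma:expectation}, $E[g^q(\gamma)] = q f(\gamma)/m_u$, so the bad event $|f^q(\gamma) - f(\gamma)| \geq \epsilon \cdot OPT$ is equivalent to $|g^q(\gamma) - \mu| \geq T$, where $\mu = q f(\gamma)/m_u$ and $T = \epsilon\, q \cdot OPT / m_u$. Setting $\delta = T/\mu = \epsilon \cdot OPT / f(\gamma)$ puts the event in multiplicative form $|g^q(\gamma) - \mu| \geq \delta\mu$, which is the form to which Chernoff applies cleanly, since the $X_i$ are i.i.d.\ Bernoulli by construction of the sampling scheme.

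In the regime $f(\gamma) \geq \epsilon \cdot OPT$ (equivalently $\delta \leq 1$), the standard two-sided multiplicative Chernoff bound yields $\Pr(|g^q(\gamma) - \mu| \geq \delta \mu) \leq 2\exp(-\delta^2 \mu / 3)$, whose exponent simplifies via $f(\gamma) \leq OPT$ to
\[
\frac{\delta^2 \mu}{3} \;=\; \frac{\epsilon^2\, q \cdot OPT^2}{3\, m_u\, f(\gamma)} \;\geq\; \frac{\epsilon^2\, q \cdot OPT}{3\, m_u}.
\]
Plugging the hypothesized lower bound $q \geq 3 m_u (l+k)\log|\Gamma|/(\epsilon^2 \cdot OPT)$ into the right-hand side forces this exponent to be at least $(l+k)\log|\Gamma|$, so the bad probability for a single $\gamma$ is at most $2|\Gamma|^{-(l+k)}$.

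The remaining regime $f(\gamma) < \epsilon \cdot OPT$ (i.e.\ $\delta > 1$) must be handled separately, since the two-sided bound above is typically stated only for $\delta \leq 1$. Here the lower tail $g^q(\gamma) \leq \mu - T$ is vacuous because $\mu - T < 0 \leq g^q(\gamma)$, so only the upper tail $g^q(\gamma) \geq (1+\delta)\mu$ contributes. For this I would invoke the one-sided form $\Pr(g^q(\gamma) \geq (1+\delta)\mu) \leq \exp(-\delta\mu/3)$ valid for $\delta \geq 1$, whose exponent $\delta \mu / 3 = T/3 = \epsilon\, q \cdot OPT/(3 m_u)$ is again at least $(l+k)\log|\Gamma|$ under the same sample-size hypothesis, yielding the same per-$\gamma$ bound (up to the harmless factor of $2$). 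A union bound over all subsets $\gamma \subseteq \Gamma$ with $|\gamma| \leq k$ contributes a factor of at most $\binom{|\Gamma|}{k} \leq |\Gamma|^{k}$, so the total failure probability is at most $|\Gamma|^{k} \cdot 2|\Gamma|^{-(l+k)} = 2|\Gamma|^{-l}$, which is exactly the claimed bound.

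The main obstacle I anticipate is the bookkeeping at the boundary $\delta = 1$: one must verify that whichever variant of Chernoff is invoked (two-sided multiplicative, one-sided multiplicative, or, alternatively, an additive Hoeffding inequality applied directly to the $0/1$ variables $X_i$) produces the same $\epsilon^2\, q \cdot OPT / m_u$ scaling in both regimes, so that a single threshold on $q$ controls both tails uniformly over all $\gamma$. The inequality $f(\gamma) \leq OPT$ is essential to turn the $f(\gamma)$ that appears in $\mu$ into the $OPT$ that appears in the stated sample complexity, and it is the only place where $OPT$, rather than the unknown $f(\gamma)$, enters. Once that alignment is in place, the union-bound step is routine and the conclusion follows directly.
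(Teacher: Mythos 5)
Your proof is correct and follows essentially the same route as the paper's: apply the multiplicative Chernoff bound to $g^q(\gamma)$ with $\delta = \epsilon \cdot OPT / f(\gamma)$, use $f(\gamma) \leq OPT$ to simplify the exponent, substitute the hypothesized sample size to get a per-$\gamma$ failure probability of $2|\Gamma|^{-(l+k)}$, and union bound over the at most $|\Gamma|^{k}$ candidate subsets. Your separate treatment of the regime $f(\gamma) < \epsilon \cdot OPT$ (i.e.\ $\delta > 1$) is in fact slightly more careful than the paper, which applies the $2\exp(-\delta^2 \mu/3)$ form without noting that it is standardly stated only for $\delta \leq 1$.
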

\begin{proof}
Using Lemma \ref{lemma:expectation}:
\begin{equation*}
\begin{split}
Pr (|f^q(\gamma)- f(\gamma)| &\geq \delta \cdot f(\gamma))\\
Pr \Big(|\frac{q}{m_u}f^q(\gamma)- \frac{q}{m_u}f(\gamma)| &\geq  \frac{q}{m_u} \cdot \delta \cdot f(\gamma)\Big)\\ 
 Pr \Big(|g^q(\gamma)- \frac{q}{m_u}f(\gamma)| &\geq \frac{q}{m_u} \cdot \delta f(\gamma)\Big)\\ 
 Pr (|g^q(\gamma)- E(g^q(\gamma))| &\geq  \delta E(g^q(\gamma))) 
\end{split}
\end{equation*}
As the samples are independent, applying Chernoff bound:
\begin{equation*}
Pr \Big(|g^q(\gamma)- \frac{q}{m_u}f(\gamma)| \geq \frac{q}{m_u}\delta f(\gamma)\Big) \leq 2\exp\Big(-\frac{\delta^2}{3} \frac{q}{m_u}f(\gamma)\Big)
\end{equation*}
Substituting $\delta=\frac{\epsilon OPT}{f(\gamma)}$ and $q$:
\begin{equation*}
Pr (|f^q(\gamma)- f(\gamma)| \geq \epsilon \cdot OPT) \leq 2\exp \Big(-\frac{OPT}{f(\gamma)} (l+k)log(\Gamma)\Big)
\end{equation*}
Using the fact that $OPT\geq f(\gamma)$:
\begin{equation*}
Pr (|f^q(\gamma)- f(\gamma)| \geq \epsilon \cdot OPT) \leq 2|\Gamma|^{-(l+k)}
\end{equation*}

Applying the union bound over all possible size-$k$ subsets of $\gamma \subset \Gamma$ (there are $|\Gamma|^k$) we conclude the following:
\begin{equation*}
Pr (|f^q(\gamma)- f(\gamma)| \geq \epsilon \cdot OPT) < 2|\Gamma|^{-l}, \forall \gamma \subset \Gamma
\end{equation*}
\begin{equation*}
Pr (|f^q(\gamma)- f(\gamma)| < \epsilon \cdot OPT) \geq 1- 2|\Gamma|^{-l}, \forall \gamma \subset \Gamma \qedhere
\end{equation*} 
\end{proof}

Now, we prove our main theorem which shows an approximation bound of $(1-\frac{1}{e}- \epsilon)$ by Algorithm \ref{algo:BUS} whenever the number of samples is at least $q(\epsilon/2)= \frac{12m_u(l+k)log(|\Gamma|)}{\epsilon^2 \cdot OPT}$ ($l$ and $\epsilon$ are as in Lemma \ref{lemma:anyE}).

\begin{thm}\label{thm:BUS_approx}
  Algorithm \ref{algo:BUS} ensures $f(\gamma)\geq (1-\frac{1}{e}- \epsilon)OPT $ with high probability $(1- \frac{2}{|\Gamma|^l})$ using at least $q(\epsilon/2)$ samples.
\end{thm}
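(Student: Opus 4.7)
\medskip

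\noindent\textbf{Proof proposal.} The plan is to combine the concentration result of Lemma~\ref{lemma:anyE} with the $(1-1/e)$-approximation guarantee of the greedy scheme on submodular functions (Theorem~\ref{thm:submodular} and Nemhauser et al.), by running the latter on the \emph{estimated} objective $f^q$ and then translating the guarantee back to $f$. Let $\gamma^{*}$ denote an optimum of $f$, so $f(\gamma^{*}) = OPT$, and let $\gamma^{q}$ denote the set returned by Algorithm~\ref{algo:BUS}.

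First I would observe that $f^q$ inherits monotonicity and submodularity from $f$ under the constraints $S^1$ and $S^2$: $f^q$ is just (a scaled version of) the coverage function restricted to the sampled pair set $Q$, so the argument used in Theorem~\ref{thm:submodular} applies verbatim. Since BUS is exactly the standard greedy on $f^q$, applying Nemhauser's bound yields
\begin{equation*}
f^q(\gamma^{q}) \;\geq\; \Bigl(1-\tfrac{1}{e}\Bigr)\,\max_{\gamma\subseteq\Gamma,\,|\gamma|\le k}\, f^q(\gamma) \;\geq\; \Bigl(1-\tfrac{1}{e}\Bigr)\, f^q(\gamma^{*}).
\end{equation*}

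Next I would invoke Lemma~\ref{lemma:anyE} with parameter $\epsilon/2$, which is precisely what the sample size $q(\epsilon/2)$ guarantees. Conditioning on the high-probability event $\mathcal{E}$ that $|f^q(\gamma)-f(\gamma)| < (\epsilon/2)\,OPT$ simultaneously for every $\gamma\subseteq\Gamma$ with $|\gamma|\le k$ (which holds with probability at least $1 - 2|\Gamma|^{-l}$), I can bound the two sides independently: $f^q(\gamma^{*}) \geq f(\gamma^{*}) - (\epsilon/2)\,OPT = (1-\epsilon/2)\,OPT$, and $f(\gamma^{q}) \geq f^q(\gamma^{q}) - (\epsilon/2)\,OPT$.

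Chaining these inequalities on $\mathcal{E}$ gives
\begin{equation*}
f(\gamma^{q}) \;\geq\; \Bigl(1-\tfrac{1}{e}\Bigr)\,f^q(\gamma^{*}) - \tfrac{\epsilon}{2}\,OPT \;\geq\; \Bigl(1-\tfrac{1}{e}\Bigr)\Bigl(1-\tfrac{\epsilon}{2}\Bigr)\,OPT - \tfrac{\epsilon}{2}\,OPT \;\geq\; \Bigl(1-\tfrac{1}{e}-\epsilon\Bigr)\,OPT,
\end{equation*}
using $1-1/e<1$ in the last step, which is the claimed bound. The stated failure probability $2/|\Gamma|^{l}$ comes directly from Lemma~\ref{lemma:anyE}.

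The only technical subtlety — and the step that deserves a line of justification — is that the concentration bound must hold \emph{uniformly} over all candidate solutions that BUS could encounter during its $k$ greedy iterations, not merely at a fixed $\gamma$. This is exactly what Lemma~\ref{lemma:anyE} provides via its union bound over all $\gamma\subseteq\Gamma$ with $|\gamma|\le k$, so a single sample set $Q$ of size $q(\epsilon/2)$ suffices for the whole run; no re-sampling is needed between iterations. With that observation in place, the rest of the argument is a short deterministic chain of inequalities.
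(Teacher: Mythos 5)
Your proposal is correct and follows essentially the same route as the paper's own proof: submodularity and monotonicity of $f^q$ inherited from Theorem~\ref{thm:submodular}, Nemhauser's greedy bound applied to $f^q$, Lemma~\ref{lemma:anyE} invoked with parameter $\epsilon/2$, and the same chain of inequalities (your $(1-\frac{1}{e})(1-\frac{\epsilon}{2})OPT - \frac{\epsilon}{2}OPT$ is exactly the paper's penultimate line, with $\gamma^*$ and $\bar{\gamma}$ merely relabeled). Your explicit remark that the union bound in Lemma~\ref{lemma:anyE} makes a single sample set suffice uniformly across all $k$ greedy iterations is a point the paper leaves implicit, but it is the same argument, not a different one.
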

\begin{proof}
$f(.)$ is monotonic and submodular (Thm. \ref{thm:submodular}) and one can prove the same for $f^q(.)$. 
Given the following:
\begin{enumerate}
    \item Lemma \ref{lemma:anyE}: The number of samples is at least $q(\epsilon/2)$. So, with probability $1- \frac{2}{|\Gamma|^l}$, $f(\gamma) \geq  f^q(\gamma)- \frac{\epsilon}{2}OPT$;
    \item $f^q(\gamma) \geq (1-\frac{1}{e})f^q(\gamma*)$, $\gamma*= \argmax_{\gamma'\subset \Gamma, |\gamma'|\leq k}$ $f^q(\gamma')$ (submodularity property of $f^q(.)$);
    \item  $f^q(\gamma*) \geq  f^q(\bar{\gamma})$, $\bar{\gamma}= \argmax_{\gamma'\subset \Gamma, |\gamma'|\leq k} f(\gamma')$
    (Note that, $OPT=f(\bar{\gamma})$)
\end{enumerate}
We can prove with probability $1- \frac{2}{|\Gamma|^l}$ that:
\begin{equation*}
\begin{split}
f(\gamma) &\geq  f^q(\gamma)- \frac{\epsilon}{2}OPT \\
&\geq \Big(1-\frac{1}{e}\Big)f^q(\gamma*)- \frac{\epsilon}{2}OPT \\
&\geq \Big(1-\frac{1}{e}\Big)f^q(\bar{\gamma})- \frac{\epsilon}{2}OPT\\
&\geq \Big(1-\frac{1}{e}\Big)\Big(f(\bar{\gamma})-\frac{\epsilon}{2}OPT\Big)-\frac{\epsilon}{2}OPT\\
& > \Big(1-\frac{1}{e}- \epsilon\Big)OPT \qedhere
\end{split}
\end{equation*} 
\end{proof}

While we are able to achieve a good probabilistic approximation with respect to the optimal value $OPT$, deciding the number of samples is not straightforward. In practice, we do not know the value of $OPT$ beforehand, which affects the number of samples needed. However, notice that $OPT$ is bounded by the number of uncovered pairs $m_u$. Moreover, the number of samples $q(\epsilon/2)$ depends on the ratio  $\frac{m_u}{OPT}$. Thus, increasing this ratio while keeping the quality constant requires more samples. Also, if $OPT$ (which depends on $X$) is close to the number of uncovered pairs $m_u$, we need fewer samples to achieve the mentioned quality. In the experiments, we assume this ratio to be constant. Next, we propose another approximation scheme where we can reduce the number of samples by avoiding the term $OPT$ in the sample size while waiving the assumption involving constants.

Let $M_u$ and $m_u$ be the set and number of uncovered pairs by $X$ respectively in the initial graph. Let us assume, \newline
\begin{equation*}
\bar{q}(\epsilon)\geq \frac{3(l+k)log(|\Gamma|)}{\epsilon^2}
\end{equation*}

\begin{cor}\label{cor:anyE2}
Given $\epsilon$ $(0<\epsilon<1)$, a positive integer $l$, a budget $k$, and a sample of independent uncovered node pairs $Q, |Q|=\bar{q}(\epsilon)$, then:
\begin{equation*}
Pr (|f^q(\gamma)- f(\gamma)| < \epsilon \cdot m_u) \geq 1- 2|\Gamma|^{-l}, \forall \gamma \subset \Gamma, |\gamma| \leq k
\end{equation*}
\end{cor}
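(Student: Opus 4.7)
The plan is to mimic the argument of Lemma~\ref{lemma:anyE} almost verbatim, with the only substantive change being that $m_u$ replaces the (unknown) $OPT$ everywhere. Since $m_u$ depends only on $X$ and the input graph, it is known \emph{a priori}; this trade removes the dependence of the sample size on the optimum, at the price of replacing the relative error $\epsilon\cdot OPT$ with the potentially looser absolute error $\epsilon\cdot m_u$.

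Concretely, I would first invoke Lemma~\ref{lemma:expectation} to identify $E(g^q(\gamma))=\frac{q}{m_u}f(\gamma)$, and then rewrite the event $\{|f^q(\gamma)-f(\gamma)|\geq \epsilon\cdot m_u\}$ by multiplying through by $q/m_u$ as $\{|g^q(\gamma)-E(g^q(\gamma))|\geq \epsilon\,q\}$. Setting $\delta=\epsilon\,m_u/f(\gamma)$ puts this in the standard form $|g^q-E(g^q)|\geq \delta\,E(g^q)$, so the independence of the sampled pairs lets us apply the same two-sided Chernoff inequality used in Lemma~\ref{lemma:anyE} and obtain the tail bound $2\exp(-\delta^{2}E(g^q(\gamma))/3)$.

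The main step is then the exponent bound. Direct computation gives $\delta^{2}E(g^q(\gamma))/3=\frac{\epsilon^{2}m_u q}{3 f(\gamma)}$, and the elementary but essential inequality $f(\gamma)\leq m_u$ — one cannot newly cover more than the total number of previously uncovered pairs — yields $\delta^{2}E(g^q(\gamma))/3\geq \epsilon^{2}q/3$. Substituting the hypothesis $|Q|\geq 3(l+k)\log(|\Gamma|)/\epsilon^{2}$ makes the exponent at least $(l+k)\log|\Gamma|$, bounding the per-$\gamma$ tail by $2|\Gamma|^{-(l+k)}$. A union bound over the at most $|\Gamma|^{k}$ candidate subsets $\gamma\subseteq\Gamma$ with $|\gamma|\leq k$ then upgrades this to the claimed uniform bound $1-2|\Gamma|^{-l}$.

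The only subtlety to watch for is that $\delta=\epsilon\,m_u/f(\gamma)$ can exceed $1$ when $f(\gamma)$ is small, so one should invoke the version of the two-sided Chernoff bound valid for all $\delta>0$; because Lemma~\ref{lemma:anyE} already implicitly relies on the same form (its $\delta=\epsilon\,OPT/f(\gamma)$ has the identical issue), no new tool is required. Beyond that caveat, the argument is a direct transcription of the previous proof, with $m_u$ in place of $OPT$ throughout.
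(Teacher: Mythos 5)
Your proposal is correct and follows essentially the same route as the paper's own appendix proof: invoke Lemma~\ref{lemma:expectation}, set $\delta=\epsilon\, m_u/f(\gamma)$, apply the Chernoff bound, use $f(\gamma)\leq m_u$ to bound the exponent by $(l+k)\log|\Gamma|$, and finish with a union bound over the $|\Gamma|^{k}$ candidate subsets. Your remark about the Chernoff form for $\delta>1$ is a fair caveat that the paper's proof shares implicitly (and, as you note, the large-$\delta$ form still yields the needed tail since $\delta\,E(g^q(\gamma))=\epsilon q$ with $\epsilon<1$), so no substantive gap remains.
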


The proof is given in the Appendix. Next, we provide an approximation bound by our sampling scheme for at least $\bar{q}(\epsilon/2)= \frac{12(l+k)log(|\Gamma|)}{\epsilon^2}$ samples.


\begin{cor}\label{cor:BUS_approx2}
  Algorithm \ref{algo:BUS} ensures $f(\gamma)\geq (1-\frac{1}{e})OPT- \epsilon.m_u$ with high probability $(1- \frac{2}{|\Gamma|^l})$ for $\bar{q}(\epsilon/2)$ samples.
\end{cor}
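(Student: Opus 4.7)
The plan is to mirror the argument used for Theorem~\ref{thm:BUS_approx}, but replacing the multiplicative deviation bound (in terms of $OPT$) by the additive bound of Corollary~\ref{cor:anyE2} (in terms of $m_u$). First, I would invoke Corollary~\ref{cor:anyE2} with the chosen sample size $\bar q(\epsilon/2)$ to assert that, with probability at least $1-2|\Gamma|^{-l}$, the event
\begin{equation*}
|f^q(\gamma') - f(\gamma')| < \tfrac{\epsilon}{2}\, m_u \qquad \forall \gamma'\subset\Gamma,\ |\gamma'|\le k
\end{equation*}
holds simultaneously. All subsequent inequalities are conditioned on this event.

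Next, I would note that the estimated objective $f^q(\cdot)$ inherits monotonicity and submodularity from $f(\cdot)$ (the same argument as in the proof of Theorem~\ref{thm:submodular} applies to the sampled pairs, since $f^q$ just counts covered pairs among the sample). Hence, by the classical Nemhauser--Wolsey--Fisher guarantee, the greedy edge set $\gamma$ returned by Algorithm~\ref{algo:BUS} satisfies $f^q(\gamma)\ge (1-1/e)\, f^q(\gamma^*)$, where $\gamma^*$ maximizes $f^q$ over size-$k$ subsets of $\Gamma$.

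The final step is a four-link chain: starting from $f(\gamma)\ge f^q(\gamma)-\tfrac{\epsilon}{2}m_u$ (Corollary~\ref{cor:anyE2}), applying the greedy guarantee, then using $f^q(\gamma^*)\ge f^q(\bar\gamma)$ where $\bar\gamma$ is the true optimum of $f$, and finally applying Corollary~\ref{cor:anyE2} once more in the reverse direction to obtain $f^q(\bar\gamma)\ge f(\bar\gamma)-\tfrac{\epsilon}{2}m_u = OPT - \tfrac{\epsilon}{2}m_u$. Putting it all together:
\begin{equation*}
f(\gamma) \;\ge\; \Bigl(1-\tfrac{1}{e}\Bigr)\Bigl(OPT - \tfrac{\epsilon}{2}m_u\Bigr) - \tfrac{\epsilon}{2}m_u \;\ge\; \Bigl(1-\tfrac{1}{e}\Bigr)OPT - \epsilon\, m_u,
\end{equation*}
where the last inequality uses $(1-1/e)\cdot\tfrac{\epsilon}{2} + \tfrac{\epsilon}{2} \le \epsilon$.

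There is no real obstacle here: the only subtlety is bookkeeping the two additive error contributions so that they sum to $\epsilon\cdot m_u$, which is precisely why the sample size is calibrated with $\epsilon/2$ rather than $\epsilon$. The advantage over Theorem~\ref{thm:BUS_approx} is that $\bar q(\epsilon/2)$ does not involve $OPT$ (which is unknown a priori), at the cost of having the approximation error scale with $m_u$ rather than $OPT$.
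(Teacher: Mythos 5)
Your proposal is correct and follows essentially the same route as the paper's own proof: the identical four-link chain $f(\gamma)\geq f^q(\gamma)-\frac{\epsilon}{2}m_u \geq (1-\frac{1}{e})f^q(\gamma^*)-\frac{\epsilon}{2}m_u \geq (1-\frac{1}{e})f^q(\bar{\gamma})-\frac{\epsilon}{2}m_u \geq (1-\frac{1}{e})\big(OPT-\frac{\epsilon}{2}m_u\big)-\frac{\epsilon}{2}m_u$, with Corollary~\ref{cor:anyE2} supplying the uniform additive deviation (in terms of $m_u$) in place of Lemma~\ref{lemma:anyE}, and the same appeal to monotonicity and submodularity of $f^q$ for the greedy guarantee. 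The only cosmetic difference is the final bookkeeping: the paper evaluates the error term exactly as $\big(\epsilon-\frac{\epsilon}{2e}\big)m_u < \epsilon\, m_u$, whereas you bound it via $(1-\frac{1}{e})\frac{\epsilon}{2}+\frac{\epsilon}{2}\leq\epsilon$; the content is identical.
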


\begin{table}[t]
\vspace{-2mm}
\centering
\begin{tabular}{|c | c | c |}
\hline
\textbf{Thm.}& \textbf{$\#$Samples}& \textbf{Approximations} \\
\hline
\textbf{Thm. \ref{thm:BUS_approx}}& $O(\frac{m_uklog(|\Gamma|)}{\epsilon^2.OPT})$ & $f(\gamma)>(1-\frac{1}{e}- \epsilon)OPT $\\
\hline
\textbf{Cor. \ref{cor:BUS_approx2}}& $O(\frac{klog(|\Gamma|)}{\epsilon^2})$ & $f(\gamma)>(1-\frac{1}{e})OPT- \epsilon.m_u$\\
\hline
\end{tabular}
\caption{$\#$Samples and approximations with prob. $(1- \frac{2}{|\Gamma|^l})$. \label{table:sample_summary}}
 \end{table}

This proof is also in the Appendix. Table \ref{table:sample_summary} summarizes the number of samples and corresponding bounds for Algorithm \ref{algo:BUS}. Theorem \ref{thm:BUS_approx} ensures higher quality with higher number of samples than Corollary \ref{cor:BUS_approx2}. On the other hand, Corollary \ref{cor:BUS_approx2} does not assume anything about the ratio $\frac{m_u}{OPT}$. The results reflect a trade-off between number of samples and accuracy.

 Theorem \ref{thm:BUS_approx} and Corollary \ref{cor:BUS_approx2} assume that a greedy approach achieves a constant-factor approximation of $(1-1/e)$, which holds only for the RCCO problem (see Sections \ref{sec:analysis_const} and \ref{sec:analysis_greedy}). 
 As a consequence, in the case of the general problem, the guarantees discussed in this Section apply only for each iteration of our sampling algorithm, but not for the final results. In other words, BUS provides theoretical quality guarantees that each edge selected in an iteration of the algorithm achieves a coverage within bounded distance from the optimal edge. Nonetheless, experimental results show, in practice, BUS is also effective in the general setting.

\section{Experimental Results}
\label{sec:exp}

\begin{table}[t]
\centering
\small
\begin{tabular}{| c | c | c |}
\hline
\textbf{Dataset Name}& \textbf{$|V|$} & \textbf{$|E|$}\\
\hline
\textbf{ca-GrQc (CG)}& 5K & 14K\\
\hline
\textbf{email-Enron (EE)}& 36K & 183K\\
\hline
\textbf{loc-Brightkite (LB)}& 58K & 214K\\
\hline
\textbf{loc-Gowalla (LG)}& 196K & 950K\\
\hline
\textbf{web-Stanford (WS)}& 280K & 2.3M\\
\hline
\textbf{DBLP (DB)}& 1.1M  & 5M\\
\hline
\end{tabular}
\caption{ Dataset description and statistics. \label{table:data_description}}
 \end{table}

\begin{table}[t]
\centering
\small
\begin{tabular}{| c | c | c | c |}
\hline
&\multicolumn{3}{c|}{\textbf{Ratio}} \\
\hline

\textbf{Data}& $k=5$ & $k=10$ & $k=15$  \\
\hline
Co-authorship & $1.02$& $1.14$ & $1.17$  \\
\hline
Synthetic & $1.0$& $1.0$ & $1.0$  \\
\hline
\end{tabular}
\caption{The ratio between the solutions produced by GES for the general (CCO) and constrained (RCCO) settings. \label{table:motivation}}
 \end{table}

\begin{figure*}[t]
\small
 \vspace{-6mm}
    \centering
    \subfloat[Quality on CG]{\includegraphics[width=0.30\textwidth]{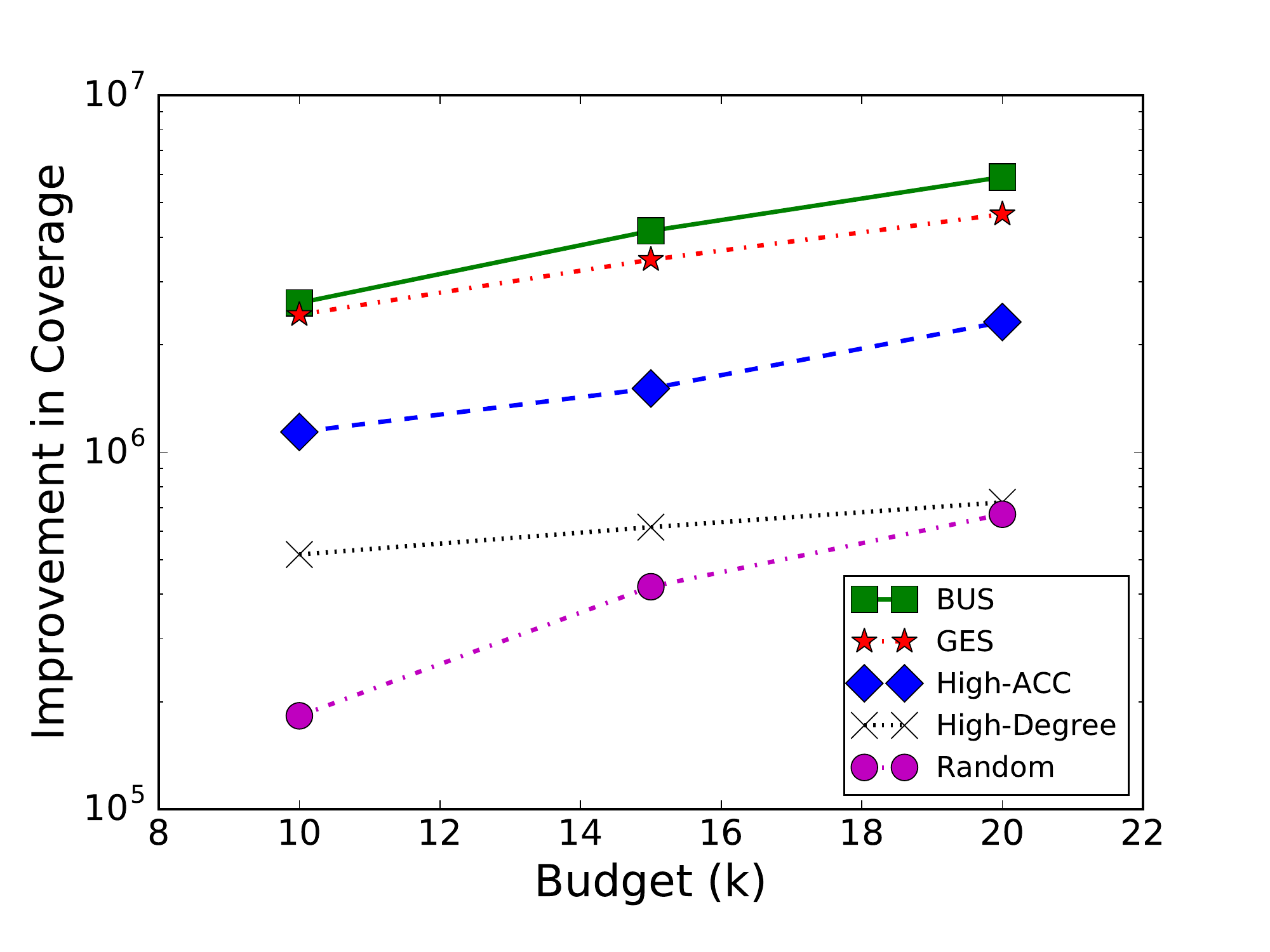}\label{fig:vs_greedy}}
    \subfloat[Fixed Budget]{\includegraphics[width=0.30\textwidth]{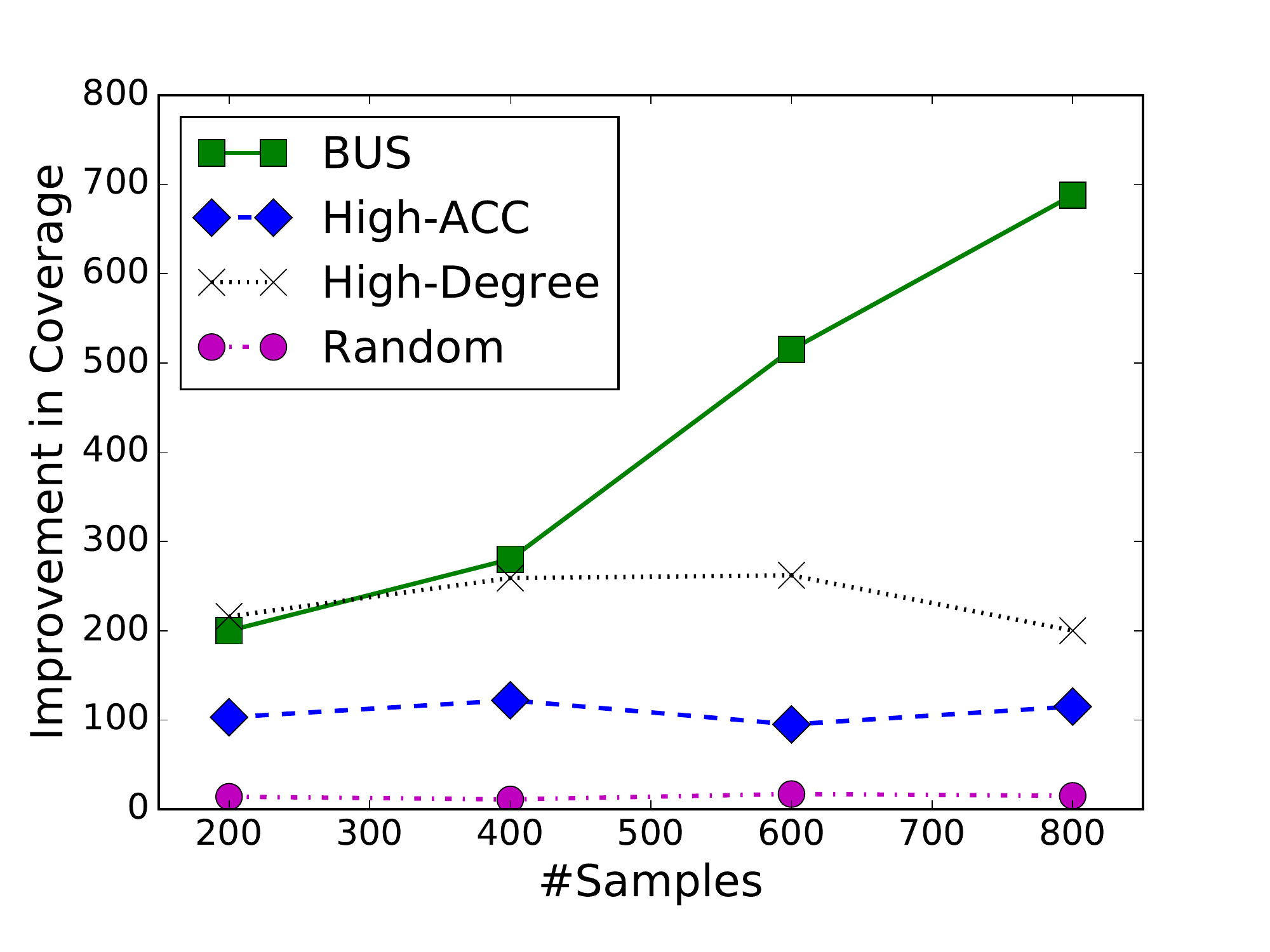}\label{fig:fix_budget}}
    \subfloat[Fixed $\#$Sample]{\includegraphics[width=0.30\textwidth]{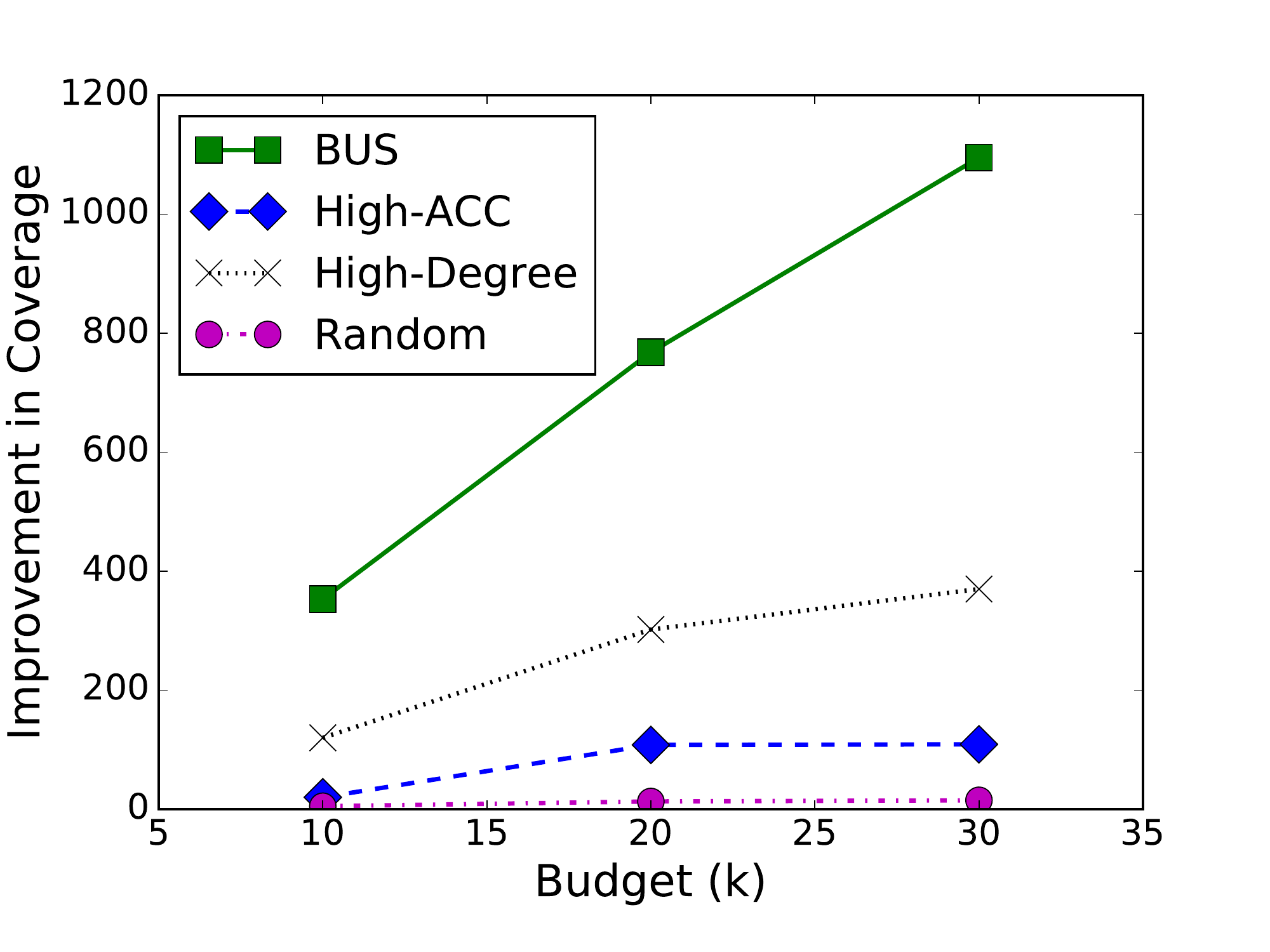}\label{fig:fix_sample}}
    \caption{ (a) BUS vs. Greedy: Improved coverage centrality produced by different algorithms on the CG dataset. (b-c) Comparison with baselines on the EE dataset varying (b) the number of samples and (c) the budget.  \label{fig:baselines_param}}
\end{figure*}


\textbf{Experimental Setup and Data:} 
We evaluate the quality and scalability of our algorithms on real-world networks. All experiments were conducted on a $3.30$GHz Intel Core i7 machine with $30$ GB RAM. Algorithms were implemented in Java and all datasets applied are available online\footnote{\small  Datasets collected from (1) \url{https://snap.stanford.edu/data/index.html}, (2) \url{http://dblp.uni-trier.de}, and (3) \url{http://www-personal.umich.edu/~mejn/netdata/}}. Table \ref{table:data_description} shows dataset statistics. The graphs are undirected and we consider the largest connected component for our experiments. Results reported are averages of $10$ repetitions.

We set the candidate of edges $\Gamma$ as those edges from $X$ to the remaining vertices that are absent in the initial graph (i.e. $\Gamma = \{(u,v)|u \in X \land v \in V\setminus X \land (u,v) \notin E\}$). 
The set of target nodes $X$ is randomly selected from the set of all nodes.

\textbf{Baselines:} We consider three baselines in our experiments: 1) \textbf{High-ACC:} Applies \textit{maximum adaptive centrality coverage} \cite{yoshida2014,mahmoodye2016} and adds edges between target nodes $X$ and the top-$k$ centrality set; 2) \textbf{High-Degree:} Selects edges between the target nodes $X$ and the top $k$ high degree nodes; 3) \textbf{Random:} Randomly chooses $k$ edges from $\Gamma$ which are not present in the graph. We also compare our sampling algorithm (BUS) against our Greedy solution (GES) and show that BUS is more efficient while producing  similar results.

 \subsection{GES: RCCO vs CCO}
 \label{sec:exp_rcco_vs_cco}
We compare coverage centrality optimization (CCO) and its restricted version (RCCO) empirically by applying 
 GES to two small datasets: a co-authorship (NetScience) and a synthetic (Barabasi) network. The target set size $|X|$ is set to $5$. 
 Table \ref{table:motivation} shows the ratio between results for CCO and RCCO varying the budget $k$. The results, close to $1$, provide evidence that RCCO is based on realistic assumptions.


\subsection{BUS vs. GES}
We apply only the smallest dataset (CG) in this experiment, as the GES algorithm is not scalable and computing all-pair-shortest-paths is required. For BUS, we set the error $\epsilon=0.3$. 
First, 
we evaluate the effect of sampling on quality, which we theoretically analyzed in Theorem \ref{thm:BUS_approx} and Corollary \ref{cor:BUS_approx2}.

 Fig. \ref{fig:vs_greedy} shows the number of new pairs covered by the algorithms. Table \ref{table:vs_greedy} shows the running times and the quality of BUS relative to the baselines---i.e. how many times more pairs are covered by BUS compared to a given baseline. BUS and GES produce results at least $2$ times better than the baselines. Moreover, BUS achieves results comparable to GES while being $2$-$3$ orders of magnitude faster. 

\subsection{Results for Large Graphs}
We compare our sampling-based algorithm against the baseline methods using large graphs (EE, LB, LG, WS and DB). Due to the high cost of computing all-pairs shortest-paths, we estimate the coverage centrality based on $10K$ randomly selected pairs. For High-ACC, we also use sampling for adaptive coverage centrality computation~\cite{yoshida2014,mahmoodye2016} and the same number of samples is used by High-ACC and BUS. The budget and target set size are set as $20$ and $5$, respectively. 

Table \ref{table:vs_baseline} shows the results, where the quality is relative to BUS results. BUS takes a few minutes ($8,15, 17, 45,85$ minutes for EE, LB, WS, LG and DB respectively) to run and significantly outperforms the baselines. This is due to the fact that existing approaches do not take into account the dependencies between the edges selected in the coverage centrality. BUS selects the edges sequentially, considering the effect of edges selected in previous steps.



\begin{table*}[ht]
\centering
\small
\begin{tabular}{| c | c | c | c | c | c | c |  c | c|}
\hline
&\multicolumn{4}{c|}{\textbf{Coverage of BUS (relative to baselines)}} & \multicolumn{3}{c|}{\textbf{Time [sec.]}} & \textbf{$\#$ Samples}\\
\hline

\textbf{Budget}& GES & HIgh-ACC & High-Degree & Random  & GES & High-ACC & BUS & BUS\\
\hline
$k=10$ & $1.08$& $2.46$ & $5.41$ & $14.45$ & $>7200$& $157.1$ & $5.1$ & $2560$\\
\hline
$k=15$ & $1.21$& $2.92$& $7.29$& $9.98$ &$>7200$ & $156.9$ & $10.1$ & $3840$\\
\hline
$k=20$ & $1.29$ &$2.78$ & $9.96$& $9.59$& $>7200$& $157.2$ & $18.2$ & $5120$\\
\hline
\end{tabular}
\caption{Comparison between our sampling algorithm (BUS) and the baselines, including our Greedy (GES) approach, using the CG dataset and varying the budget $k$. We evaluate the coverage of BUS relative to the baselines---i.e. how many times more new pairs are covered by BUS compared to the baseline. \label{table:vs_greedy}}
 \end{table*}

 \begin{table}[ht]
\centering
\scriptsize
\begin{tabular}{| c | c | c | c | c | c |  c | c|}
\hline
&\multicolumn{3}{c|}{\textbf{Coverage of BUS (relative to baselines)}} & \textbf{$\#$ Samples}\\
\hline

\textbf{Data} & High-Acc & High-Degree & Random & BUS\\
\hline
EE &  $4.88$ & $2.74$ & $51$ & $6462$\\
\hline
LB & $3.3$& $2.3$& $33.8$  & $6796$\\
\hline
LG & $3.3$ & $4.2$& $62$ & $4255$\\
\hline
WS & $1.89$ & $1.95$& $4.8$ & $2000$\\
\hline
DB & $2.5$ & $1.6$& $5$ & $875$\\
\hline
\end{tabular}
\caption{BUS vs. baselines for large datasets.  \label{table:vs_baseline}}
 \end{table}

\subsection{Parameter Sensitivity}
The main parameters of BUS are the budget and the number of samples---both affect the error $\epsilon$, as discussed in Thm. \ref{thm:BUS_approx} and Cor. \ref{cor:BUS_approx2}. We study the impact of these two parameters on performance. 
Again, we estimate coverage using $10K$ randomly selected pairs of nodes. 

Figure \ref{fig:fix_budget} shows the results on EE data for budget $20$ and target set size $5$. With $600$ samples, BUS produces results at least $2$ times better than the baselines. Next, we fix the number of samples and vary the budget. Figure \ref{fig:fix_sample} shows the results on EE data with $10K$ samples and $5$ target nodes. BUS produces results at least $2.5$ times better than the baselines. 
Moreover, BUS takes only $30$ seconds to run with budget of $30$ and $1000$ samples. We find that the running time grows linearly with the budget for a fixed number of samples. These results validate the running time analysis from Section \ref{sec::approximation_algorithms}.


\begin{table}[t]
\centering
\small
\begin{tabular}{| c | c | c | c | c | c |  c | c| c| c|}
\hline
&\multicolumn{3}{c|}{Influence} & \multicolumn{3}{c|}{Distance} & \multicolumn{3}{c|}{Closeness}\\
\hline
\textbf{$k$} & EE & LB & LG & EE & LB & LG & EE & LB & LG\\
\hline
$25$ & $57.7$ & $12.2$ & $10.7$ & $2.7$ & $1.2$ & $2.2$ & $2.0$& $2.0$ & $1.0$ \\
\hline
$50$ & $96.8$ & $17.5$ & $92.7$ & $3.8$ & $3.5$ & $3.3$ & $4.9$ &$3.9$ & $4.0$ \\
\hline
$75$ & $134.3$ & $29.1$ & $45.9$ & $5.2$ & $2.1$ & $2.3$ & $5.9$ & $2.3$ & $1.9$ \\
\hline
\end{tabular}
\caption{ Improvement of other metrics after adding the edges found by BUS: the numbers are improvement in percentage with respect to the value for the initial graph. \label{table:other_metric}}
 \end{table}

\subsection{Impact on other Metrics}

While this paper is focused on optimizing Coverage Centrality, it is interesting to analyze how our methods affect other relevant metrics. Here, we look at the following ones:
1) influence, 2) average shortest-path distance, and 3) closeness centrality. The idea is to assess how BUS improves the influence of the target nodes, decreases the distances from the target to the remaining nodes, and increases the closeness centrality of these nodes as new edges are added to the graph. For influence analysis, we consider the popular independent cascade model \cite{kempe2003maximizing} assuming edge probabilities as $0.1$. In all the experiments, we fix the number of sampled pairs at $1000$ and choose $10$ nodes, uniformly at random, as the target set $X$. The metrics are computed before and after the addition of edges and presented as the relative improvement in percentage. Notice that because target nodes are chosen at random, increasing the budget does not necessarily lead to an increase in the metrics considered. 

Results are presented in Table \ref{table:other_metric}. There is a significant improvement of the three metrics as the budget ($k$) increases. For influence, the number of seed nodes is small, and thus the relative improvement for increasing $k$ is large. The improvement of the other metrics is also significant. For instance, in EE, the decrease in distance is nearly $5\%$, which is approximately $72K$, for a budget of $75$. 

\section{Towards More General Settings}
\label{sec:general_settings}

 We start by extending our approaches to solve the Coverage Centrality optimization problem on \textit{directed} graphs (e.g. our motivating example in Figure \ref{fig:motivation}). 
 In this setting, edges are added from or towards the target nodes $X$
 ---i.e. directed edges in $\Gamma$ are of the form $(u,v)$ or $(v,u)$ where $u \in V\setminus X, v\in X$. 
 
\begin{problem}\textbf{Coverage Centrality Optimization in Directed Graphs (CCO-D):} Given a directed network $G=(V,E)$, a node set $X \subset V$, a candidate edge set $\Gamma$, and a budget $k$, find edges $E_s\subset \Gamma$, such that $|E_s|\leq k$ and $C_m(X)$ is maximized.
\label{def:CCO-D}
\end{problem}

We assume the same constraint $S^2$ for CCO-D. The next theorem shows an inapproximability result for this problem.

\begin{thm}\label{thm:directed_approx_lower} CCO-D under $S^2$ cannot be approximated within a factor greater than $(1- \frac{1}{e})$.
\end{thm}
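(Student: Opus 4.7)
My plan is to mimic the $L$-reduction from Maximum Set Cover (MSC) used in the proof of Theorem~\ref{thm:inapprox}, adapting the undirected gadget of Theorem~\ref{thm:nphard} to a directed graph that still respects constraint $S^2$. Given an MSC instance with universe $U$ and sets $S_1,\dots,S_m$, I would build a directed graph on the same vertices as before (one node per set $S_i$, one per element $u_j$ and its copy $t_j$, plus $a,b,c$), set $X=\{a\}$, let $Z$ consist of the pairs $(b,u_j)$, and let $\Gamma$ contain candidate arcs from $a$ to each $S_i$. The remaining edges would be oriented so that the only way to shorten a $b\to u_j$ walk is to use a newly added arc $a\to S_i$ with $u_j\in S_i$: concretely, orient $b\to a$, $b\to c$, $c\to t_j$, $t_j\to u_j$, $u_j\to S_i$ whenever $u_j\in S_i$, and keep the $S_i$-clique bidirectional so that the length-$3$ route $b\to a\to S_i\to u_j$ and longer alternatives like $b\to c\to t_k\to u_k$ remain available.

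Next I would verify the two $L$-reduction inequalities. A choice of $k$ candidate arcs corresponds to a family $S'\subseteq\{S_1,\dots,S_m\}$, and each pair $(b,u_j)$ with $u_j\in\bigcup_{S_i\in S'}S_i$ acquires a length-$3$ shortest path through $a$, while previously its distance was at least $4$. Hence the coverage value $s(T^C)$ equals the number of elements covered by $S'$ in MSC, and no factor of two is needed since directed pairs are already ordered. This yields $OPT(I_{\text{CCO-D}})=OPT(I_{\text{MSC}})$ and $OPT(I_{\text{MSC}})-s(T^M)=OPT(I_{\text{CCO-D}})-s(T^C)$, so the $L$-reduction holds with parameters $x=y=1$. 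Feeding this into MSC's $(1-1/e)$-inapproximability, as in Theorem~\ref{thm:inapprox}, yields the claim. Constraint $S^2$ is automatic, as each newly covered $(b,u_j)$ is covered by a single added arc $a\to S_i$ on its length-$3$ shortest path.

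The main obstacle I foresee is the permissiveness of $\Gamma$ in CCO-D, which allows candidate arcs in either direction between $X$ and $V\setminus X$, that is, both $a\to S_i$ and $S_i\to a$. I must ensure that reverse arcs $S_i\to a$ do not produce unexpected $a$-covered shortest paths for pairs in $Z$, which would decouple the coverage gain from the set-cover value. I would resolve this by observing that in my orientation $a$ has no outgoing arcs at all unless a forward candidate arc $a\to S_i$ is chosen, so a lone reverse arc $S_i\to a$ cannot place $a$ on any $b\to u_j$ shortest path. Consequently, the optimum in CCO-D is still attained by selecting forward arcs only, the correspondence with MSC remains tight, and the inapproximability bound transfers cleanly.
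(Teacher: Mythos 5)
Your strategy is the paper's own: an $L$-reduction from maximum coverage using a directed version of the Theorem~\ref{thm:nphard} gadget, with $X=\{a\}$, $Z=\{(b,u_j): u_j\in U\}$, candidate arcs from $a$ to the set nodes, and parameters $x=y=1$ (the factor of two from Theorem~\ref{thm:inapprox} disappears exactly as you say, because directed pairs are ordered). The only structural difference is that the paper drops the copy layer $T$ and wires $c$ directly to the element nodes, on $m+n+3$ vertices, while you retain the $t_j$'s; both serve the same purpose of giving $b$ a route to each $u_j$ that avoids $a$. Your explicit argument that reverse candidate arcs $S_i\to a$ are harmless---since $a$ has no outgoing arcs until a forward arc is added---is a useful point the paper leaves implicit.

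However, as literally written your construction fails: you orient the set--element arcs as $u_j\to S_i$, and then the claimed covering route $b\to a\to S_i\to u_j$ uses the nonexistent arc $S_i\to u_j$. With your stated orientation, $a$ cannot reach any $u_j$ even after adding arcs $a\to S_i$, so $C_m(\{a\})=0$ for every choice of candidate arcs and the correspondence with MSC collapses. The fix is simply to flip these arcs to $S_i\to u_j$, which is the paper's orientation. After that fix, one more correction: the pre-existing route $b\to c\to t_j\to u_j$ in your graph has length exactly $3$, not ``at least $4$,'' so the new route $b\to a\to S_i\to u_j$ of length $3$ only \emph{ties} it rather than strictly improving it. This is harmless---coverage centrality counts a pair as covered when $a$ lies on at least one shortest path, and the undirected constructions in Theorems~\ref{thm:nphard} and~\ref{thm:inapprox} rely on the same tie---but your claim of strict improvement should be restated accordingly. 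With these two repairs your proof is correct and essentially coincides with the paper's.
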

 
 \begin{proof}
  We give a $L$-reduction \cite{williamson2011design} from the maximum coverage (MSC) problem with parameters $x$ and $y$. Our reduction is such that following two equations are satisfied:
 \begin{equation}
     OPT(I_{CD }) \leq xOPT(I_{MSC})
     \vspace{-2mm}
 \end{equation}
  \begin{equation}
     OPT(I_{MSC})-s(T^M) \leq y(OPT(I_{CD })-s(T^C))
 \end{equation}
where $I_{MSC}$ and $I_{CD }$ are the two problem instances, $OPT$ denotes the optimal values of the optimization problem instances. $s(T^M)$ and $s(T^C)$ denote any solution of the MSC and CCO-D instances respectively. If the conditions hold and CCO-D has an $\alpha$ approximation, then MSC has an $(1-xy(1-\alpha))$ approximation algorithm. However, MSC is NP-hard to approximate within a factor greater than $(1-\frac{1}{e})$. It follows that $(1-xy(1-\alpha))< (1-\frac{1}{e})$, or, $\alpha < (1-\frac{1}{xye})$ \cite{crescenzi2015}. So, if the above two conditions are satisfied then CCO-D is NP-hard to approximate within a factor greater than $(1-\frac{1}{xye})$. 

 Consider an instance of the Maximum Coverage (MSC) problem, defined by a collection of subsets $S_{1},S_{2},...,S_{m}$ for a universal set of items $U=\{ u_{1},u_{2},...,u_{n} \}$. To define a corresponding CCO-D instance, we construct an directed graph with $m+n+3$ nodes: there are nodes $i$ and $j$ corresponding to each set $S_{i}$ and each element $u_{j}$ respectively, and an directed edge $(i,j)$ whenever $u_{j}\in S_{i}$. Three more nodes ($a,b$ and $c$) are added to the graph where $a$ is in $X$
. Node $c$ is connected to $u_{i}$ by $(c,u_i)$ for all $i \in {1,2,...,n}$. Node $b$ is attached to $a$ by $(b,a)$ and $c$ by $(b,c)$. Figure \ref{fig:directed_inapprox} shows an example of this construction. The reduction clearly takes polynomial time. The candidate set $\Gamma$ consists of the edges between $a$ and the set $S$. 
For CCO-D, the set to be covered, $Z$ contains pairs in the form $(b,u)$ where $u \in U$.
 
 Let the solution of $I_{CD}$ be $s(T^C)$. The centrality of $a$ will increase by $s(T^C)$ to cover the pairs in $Q$. Note that $s(T^C)= s(T^M)$ by  construction. It follows that both the conditions are satisfied when $x=y=1$. So, CCO-D is NP-hard to approximate within a factor grater than $(1-\frac{1}{e})$.
 \end{proof}

\begin{figure}[t]
    \centering
      \vspace{-3mm}
   \includegraphics[width=.35\textwidth]{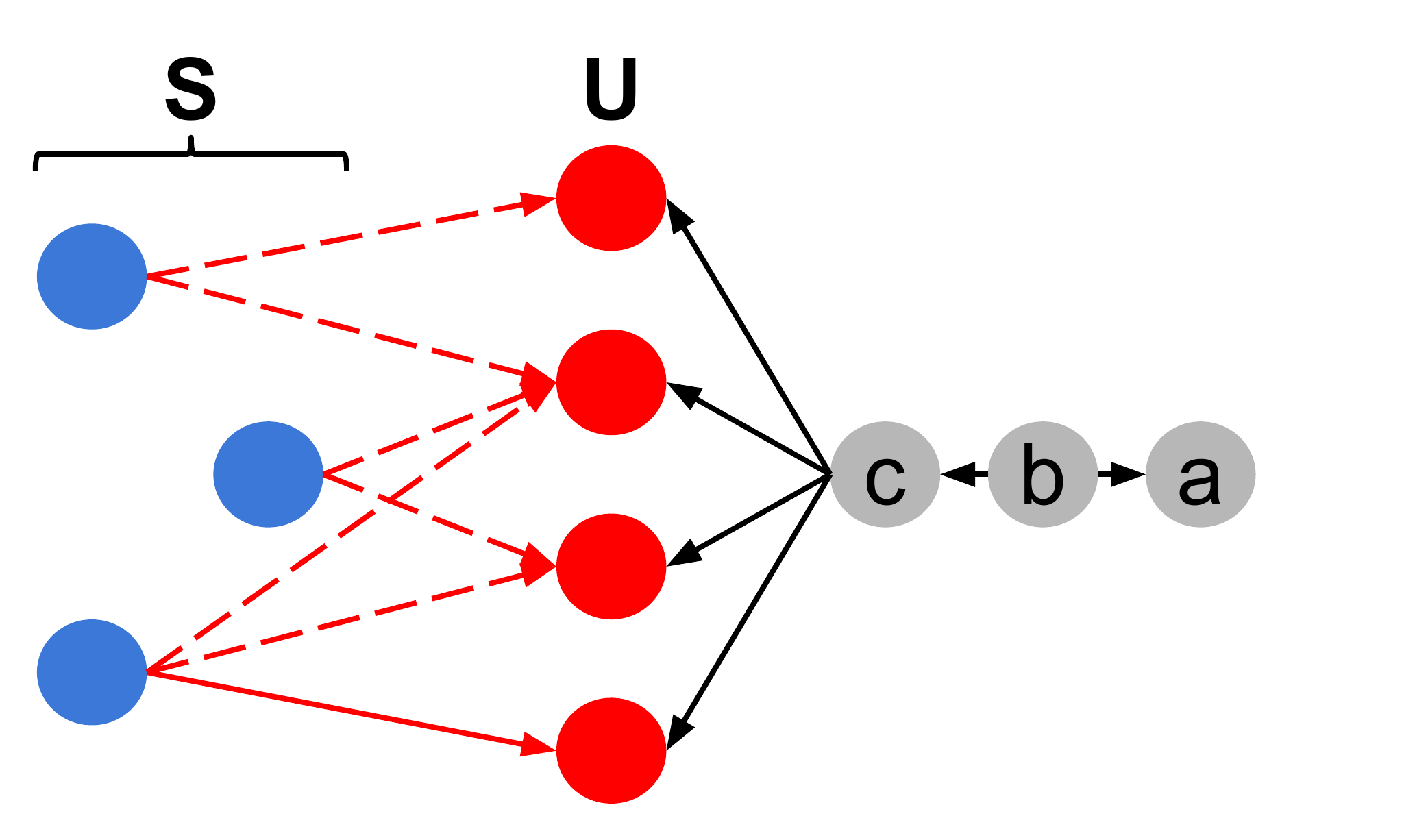}
   \vspace{-1mm}
    \caption{Example of reduction from Maximum Set Coverage
    , where $|U|=4$ and $|S|=3$. 
     Target set $X=\{a\}$ and candidate edges $\Gamma$ connect $a$ to nodes in set $S$.\label{fig:directed_inapprox}}
 \end{figure}
The next theorem shows that the objective function associated with Problem~\ref{def:CCO-D} under $S^2$ is also monotone and submodular, as was the case for the undirected setting.

\begin{thm}\label{thm:submodular_wt} Given $X$, the objective function, $f(E_s)=C_m(X)$ in CCO-D is monotone and submodular.
\end{thm}

The proof for Theorem~\ref{thm:submodular_wt} is similar to that for Theorem~\ref{thm:submodular}. Based on this Theorem, our algorithm (BUS) can be applied to solve CCO-D with similar guarantees. In other words, our approach is agnostic to the direction of edges. Interestingly, Theorem \ref{thm:directed_approx_lower} and Theorem \ref{thm:submodular_wt} certify that GES achieves the best approximation for the constrained CCO-D problem.

We also briefly discuss group centrality optimization under different settings. In particular, we focus on possible restrictions on the set of candidate edges $\Gamma$. \textit{For undirected graphs}, \textbf{$S^0$:} $\Gamma$ is a subset of the set of absent edges, \textbf{$S^1$:} $\Gamma$ consists of absent edges of the form $(u,v)$ where either $u$ or $v$ belongs to the target set $X$, and \textbf{$S^2$:} a pair is covered using at most one newly added edge. For \textit{directed graphs}, \textbf{$S^3$:} $\Gamma$ is a subset of the set of absent edges with arbitrary direction and \textbf{$S^4$:} $\Gamma$ consists of absent edges of the form $(u,v)$ where either $u$ or $v$ belongs to $X$,  with any direction.
  The hardness of these problems can be assessed with variations of the reasoning applied in Theorem $\ref{thm:nphard}$. Table \ref{table:problems} summarizes the different problem settings. 
  We have already proven that the objective function is submodular for $S^1$ and $S^2$ (undirected) and for $S^4$ and $S^2$ (directed). Additionally, we prove that the objective functions for $S^1$ and $S^4$ individually are not submodular.

 \begin{table}[ht]
\centering
\small
\begin{tabular}{| c | c | c| c| c| c| c|}
\hline
& \multicolumn{3}{c|}{\textbf{Undirected}} & \multicolumn{3}{c|}{\textbf{Directed}}\\
\hline
\textbf{Settings}& $S^0$ & $S^1$ & $S^1$, $S^2$ & $S^3$ & $S^4$ & $S^4$,$S^2$\\
\hline
\textbf{Submodularity}& No & No & Yes & No & No & Yes \\
\hline
\end{tabular}
\caption{ Summary of CCO under different settings. \label{table:problems}}
 \end{table}

\noindent
\textbf{Non-submodularity under $S^1$ and $S^4$:} Counterexamples are shown in Fig. \ref{fig:non_submodular}. 
\textit{For $S^1$:} Consider $T=\{(x,a)\}, S=\{\}, e=(x,b)$ and a target node $x$. Clearly, $S\subset T$ and $f(T)=f(S)=0$. But $f(T\cup \{e\})=1$ as $x$ is covering $(a,b)$, whereas $f(S\cup \{e\})=0$. So, $f(T \cup \{e\})- f(T) > f(S \cup \{e\})- f(S)$, and, $f$ is not submodular. 
\textit{For $S^4$:} The proof is similar to $S^1$. Let $T=\{(a,x)\}, S=\{\}, e=(x,b)$ and the target node be $x$. Thus, $S\subset T$ and $f(T)=f(S)=0$. But $f(T\cup \{e\})=1$ as $x$ is covering $(a,b)$, whereas $f(S\cup \{e\})=0$. So, $f(T \cup \{e\})- f(T) > f(S \cup \{e\})- f(S)$ and $f$ is not submodular.

\begin{figure}[ht]
    \centering
    \vspace{-9mm}
    \subfloat[Undirected, $S^1$]{
        \includegraphics[keepaspectratio, width=.23\textwidth]{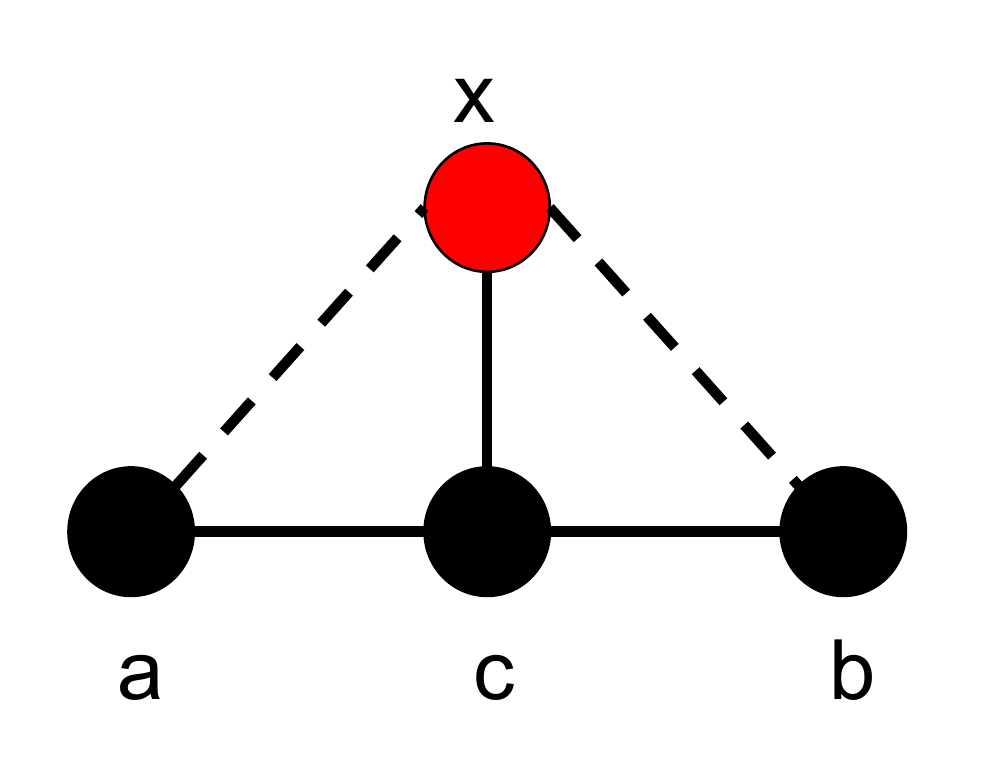}
       \label{fig:S1}
    }
    \subfloat[Directed, $S^4$]{
       \includegraphics[keepaspectratio, width=.23\textwidth]{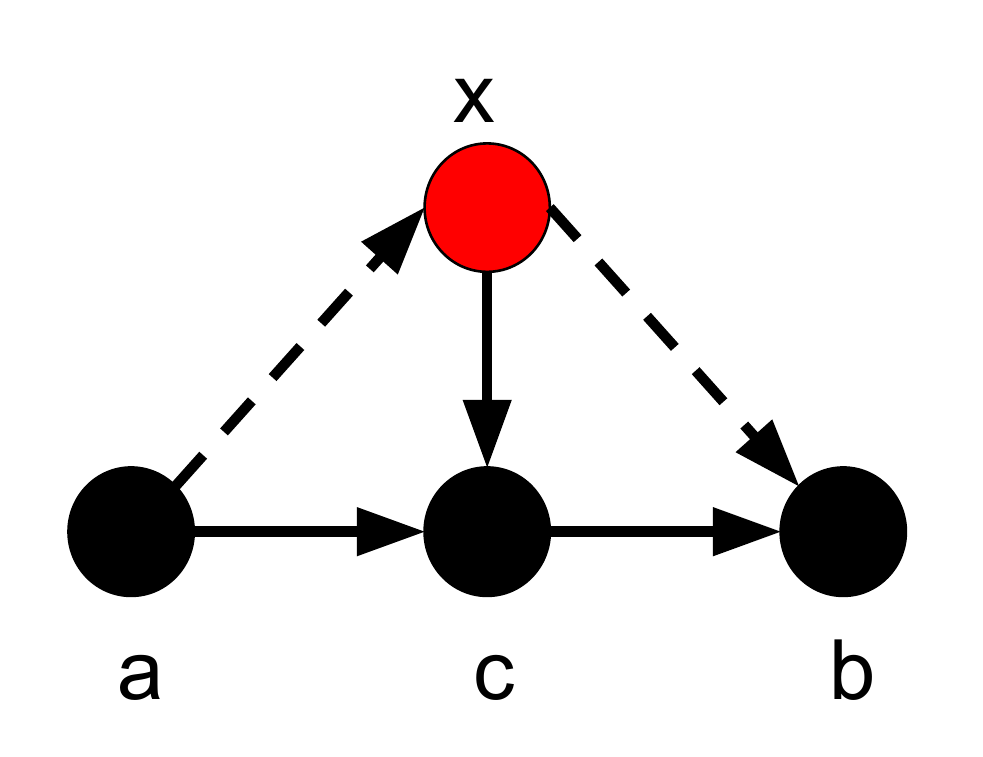}
       \label{fig:S6}
    }
       \vspace{-1mm}
     \caption{Non-submodularity under (a) $S^1$ and (b) $S^4$.}\label{fig:non_submodular}
     \vspace{-4mm}
\end{figure}

\section{Previous Work}
\label{prev_work}
There is a considerable amount of work on network design targeting various objectives by modifying the network structure and node/edge attributes. 

\textit{General network design problems:} A set of design problems were introduced by Paik et al.~\cite{paik1995}. They focused on vertex upgrades to improve the delays on adjacent edges. 
 Krumke et al.~\cite{krumke1998} generalized this model and proposed minimizing the cost of the minimum spanning tree with varying upgrade costs for vertices/edges.
Lin et al.~\cite{lin2015} also proposed a shortest path optimization problem via improving edge weights under a budget constraint and with undirected edges. In~\cite{dilkina2011, medya2016}, the authors studied a different version of the problem, where weights are set to the nodes.

 \textit{Design problems via edge addition:} 
 Meyerson et al.~\cite{meyerson2009} proposed approximation algorithms for single-source and all-pair shortest paths minimization. Faster algorithms for the same problems were presented in~\cite{parotisidis2015selecting}. Demaine et al.~\cite{demaine2010} minimized the diameter of a network and node eccentricity by adding shortcut edges with a constant factor approximation algorithm. Past research had also considered eccentricity minimization in a composite network
 ~\cite{perumal2013}. However, all aforementioned problems are based on improving distances
  and hence are complementary to our objective. 

\textit{Centrality computation and related optimization problems:} 
This line of research is the most related to the present work.
The first efficient algorithm for betweenness centrality computation was proposed by Brandes \cite{brandes2001}. Recently, \cite{riondato2014} introduced an approach for computing the top-$k$ nodes in terms of betweenness centrality via VC-dimension theory. Yoshida \cite{yoshida2014} studied similar problems ---for both betweenness and coverage centrality--- in the adaptive setting, where shortest paths already covered by selected nodes are not taken into account. Yoshida's algorithm was later improved using a different sampling scheme~\cite{mahmoodye2016}.  Here, we focus on the design version of the problem, where the goal is to optimize the coverage centrality of a target set of nodes by adding edges. When the target set has size one, optimization of different centralities was studied in~\cite{crescenzi2015,ishakian2012framework}. 
In \cite{parotsidis2016centrality}, the authors solved a similar problem, which is maximizing the expected decrease in the sum of the shortest paths from a single source to the remaining nodes via edge addition.

\section{Conclusions}

In this paper, we studied several variations of a novel network design problem, the group centrality optimization. This general problem has applications in a variety of domains including social, collaboration, and communication networks. From a computational hardness perspective, we have shown that the variations of problem are NP-hard as well as APX-hard. Moreover, we have proposed a simple greedy algorithm, and even faster sampling algorithms, for group centrality optimization. Our algorithms provide theoretical quality guarantees under realistic constrained versions of the problem and also outperform the baseline methods by up $5$ times in real datasets. While we have focused our discussion on coverage centrality, our results also generalize to betweenness centrality. From a broader point of view, we believe that this paper highlights interesting properties of network design problems compared to their standard search counterparts.


As future work, we will investigate the dynamic version of the problem~\cite{hayashi2015fully,lerman2010centrality,takaguchi2016coverage}, where coverage centrality has to be maintained under temporal, and possibly adversarial, edge updates. This problem has interesting connections with existing work on \textit{Game Theory}~\cite{ciftcioglu2016}. Moreover, we will study other design problems that optimize social influence and consensus in networks~\cite{Khalil2014,chaoji2012recommendations}.


\bibliographystyle{abbrv}
\bibliography{icdm2017}  

\section*{Appendix}

\noindent
\textbf{Proof of Corollary \ref{cor:anyE2}} \newline
Using Lemmas \ref{lemma:expectation} and \ref{lemma:anyE}:
\begin{equation*}
\begin{split}
Pr (|f^q(\gamma)- f(\gamma)| &\geq \delta \cdot f(\gamma))\\
Pr (|g^q(\gamma)- E(g^q(\gamma))| &\geq \delta E(g^q(\gamma))) 
\end{split}
\end{equation*}
The rest of the proof follows that for Lemma \ref{lemma:anyE} but replacing $OPT$ by $m_u$. As the samples are independent, we can apply the Chernoff bound:
\begin{equation*}
Pr \Big(|g^q(\gamma)- \frac{\bar{q}}{m_u}f(\gamma)| \geq \frac{\bar{q}}{m_u}\delta f(\gamma)\Big) \leq 2\exp \Big(-\frac{\delta^2}{3} \frac{\bar{q}}{m_u}f(\gamma)\Big)
\end{equation*}
Now, substituting $\delta=\frac{\epsilon \cdot m_u}{f(\gamma)}$ and $\bar{q}$:
\begin{equation*}
Pr (|f^q(\gamma)- f(\gamma)| \geq \epsilon \cdot m_u) \leq 2\exp \Big(-\frac{m_u}{f(\gamma)}(l+k)log(\Gamma)\Big)
\end{equation*}

Using the fact that $m_u\geq f(\gamma)$:
\begin{equation*}
Pr (|f^q(\gamma)- f(\gamma)| \geq \epsilon m_u) \leq 2|\Gamma|^{-(l+k)}
\end{equation*}
Now, we apply the union bound over all possible size-$k$ subsets of $\gamma \subset \Gamma$ (there are $|\Gamma|^k$) to get the following:

\begin{equation*}
Pr (|f^q(\gamma)- f(\gamma)| \geq \epsilon \cdot m_u) < 2|\Gamma|^{-l}, \forall \gamma \subset \Gamma
\end{equation*}
\begin{equation*}
Pr (|f^q(\gamma)- f(\gamma)| < \epsilon \cdot m_u) \geq 1- 2|\Gamma|^{-l}, \forall \gamma \subset \Gamma
\end{equation*}

\vspace{1mm}
\noindent
\textbf{Proof of Corollary \ref{cor:BUS_approx2}} \newline
By the same arguments as in the proof of Theorem \ref{thm:BUS_approx}, with probability $1- \frac{2}{|\Gamma|^l}$:
\begin{equation*}
\begin{split}
f(\gamma) &\geq  f^q(\gamma)- \frac{\epsilon}{2}m_u \\
& \geq \Big(1-\frac{1}{e}\Big)f^q(\gamma*)- \frac{\epsilon}{2}m_u \\
& \geq \Big(1-\frac{1}{e}\Big)f^q(\bar{\gamma})- \frac{\epsilon}{2}m_u \\
& \geq \Big(1-\frac{1}{e}\Big)\Big(f(\bar{\gamma})-\frac{\epsilon}{2}M_u\Big)-\frac{\epsilon}{2}m_u \\ 
& = \Big(1-\frac{1}{e}\Big)OPT- \Big(\epsilon-\frac{\epsilon}{2e}\Big) m_u \\ 
& > \Big(1-\frac{1}{e}\Big)OPT- \epsilon m_u 
\end{split}
\end{equation*}



\end{document}